\documentclass[sigplan,screen]{acmart}

\copyrightyear{2022}
\acmYear{2022}
\setcopyright{acmlicensed}\acmConference[EuroSys '22]{Seventeenth European
Conference on Computer Systems}{April 5--8, 2022}{RENNES, France}
\acmBooktitle{Seventeenth European Conference on Computer Systems (EuroSys
'22), April 5--8, 2022, RENNES, France}
\acmPrice{15.00}
\acmDOI{10.1145/3492321.3519594}
\acmISBN{978-1-4503-9162-7/22/04}

\usepackage{amsmath,amsfonts,amsthm}
\usepackage{algorithm}
\usepackage[noend]{algpseudocode}
\usepackage[shortlabels]{enumitem}
\usepackage{graphicx}
\usepackage{textcomp}
\usepackage{xcolor}
\usepackage{url}
\usepackage{xspace}
\usepackage{cleveref}
\usepackage{pifont}
\usepackage{booktabs}
\usepackage{wrapfig}
\usepackage[font=footnotesize]{caption}
\usepackage{thmtools}
\usepackage{thm-restate}
\usepackage{hyperref}
\usepackage{tabularx}
\usepackage{enumitem}

\algdef{SE}[Upon]{Upon}{EndUpon}[1]{\textbf{upon}\ #1\ \algorithmicdo}{\algorithmicend\ \textbf{}}%
\algtext*{EndUpon}
\newcommand{\cmark}{\ding{51}}%
\newcommand{\xmark}{\ding{55}}%

\microtypecontext{spacing=nonfrench}
\setlength\marginparwidth{2cm}
\usepackage{todonotes}

\def\cameraReady{} 

\begin{document}


\ifdefined\cameraReady
\renewcommand{\todo}[2][]{}
\fi

\newcommand{\textcomment}[2]{\textbf{#1:} #2}
\newcommand\alberto[1]{\todo[color=yellow,inline]{\textcomment{Alberto}{#1}}}
\newcommand\sasha[1]{\todo[color=green,inline]{\textcomment{Sasha}{#1}}}
\newcommand\george[1]{\todo[color=brown,inline]{\textcomment{George}{#1}}}
\newcommand\oded[1]{\todo[color=orange,inline]{\textcomment{Oded}{#1}}}
\newcommand\lefteris[1]{\todo[color=orange,inline]{\textcomment{lefteris}{#1}}}
\newcommand\com[1]{}

\newcommand\sysname{Narwhal\xspace}
\newcommand\consname{Tusk\xspace}
\newcommand\mempool{Mempool\xspace}

\newcommand\chainspace{Chainspace\xspace}
\newcommand\ethereum{Ethereum\xspace}
\newcommand\hyperledger{Hyperledger\xspace}
\newcommand\omniledger{Omniledger\xspace}
\newcommand\rapidchain{RapidChain\xspace}
\newcommand\coconut{Coconut\xspace}
\newcommand\bft{BFT\xspace}
\newcommand\rscoin{RSCoin\xspace}
\newcommand\bftsmart{\textsc{bft-SMaRt}\xspace}
\newcommand\byzcuit{Byzcuit\xspace}
\newcommand\bitcoin{Bitcoin\xspace}
\newcommand\bitcoinng{Bitcoin-NG\xspace}
\newcommand\cosi{CoSi\xspace}
\newcommand\byzcoin{ByzCoin\xspace}
\newcommand\elastico{Elastico\xspace}
\newcommand\algorand{Algorand\xspace}
\newcommand\hyperledgerfabric{Hyperledger Fabric\xspace}
\newcommand\pbft{PBFT\xspace}
\newcommand\bftlong{Byzantine Fault-Tolerant\xspace} 
\newcommand\solidus{Solidus\xspace}
\newcommand\hashgraph{Hashgraph\xspace}
\newcommand\avalanche{Avalanche\xspace}
\newcommand\blockmania{Blockmania\xspace}
\newcommand\stellar{Stellar\xspace}
\newcommand\libra{Libra\xspace}
\newcommand\librabft{LibraBFT\xspace}
\newcommand\move{Move\xspace}
\newcommand\hotstuff{HotStuff\xspace}
\newcommand\vanillahs{Vanilla-HotStuff\xspace}
\newcommand\batchedhs{Batched-HotStuff\xspace}

\newcommand{\keyword}[1]{\normalfont \texttt{#1}}
\newcommand{\accounts}{\keyword{accounts}}

\newcommand{\transfer}{O}
\newcommand{\cert}{C}
\newcommand{\sync}{S}
\newcommand{\account}{a}
\newcommand{\authority}{\alpha}

\newcommand{\para}[1]{\vspace{2mm}\noindent\textbf{#1}.\xspace}

\newcommand{\cf}{cf.\@\xspace}
\newcommand{\vs}{vs.\@\xspace}
\newcommand{\etc}{etc.\@\xspace}
\newcommand{\ala}{ala\@\xspace}
\newcommand{\wrt}{w.r.t.\@\xspace}
\newcommand{\etal}{\textit{et al.}\@\xspace}
\newcommand{\eg}{\textit{e.g.}\@\xspace}
\newcommand{\ie}{\textit{i.e.}\@\xspace}
\newcommand{\via}{\textit{via}\@\xspace}
\newcommand{\defacto}{\textit{de facto}\@\xspace}

\newtheorem{assumption}{Security Assumption}

\newcommand\inlinesection[1]{{\bf #1.}}

\def\first{({i})\xspace}
\def\second{({ii})\xspace}
\def\third{({iii})\xspace}
\def\fourth{({iv})\xspace}
\def\fifth{({v})\xspace}
\def\sixth{({vi})\xspace}

\newcommand{\one}{({i})\xspace}
\newcommand{\two}{({ii})\xspace}
\newcommand{\three}{({iii})\xspace}
\newcommand{\four}{({iv})\xspace}
\newcommand{\five}{({v})\xspace}
\newcommand{\six}{({vi})\xspace}

\definecolor{verylightgray}{gray}{0.9}

\newcommand\vgap{\vskip 2ex}
\newcommand\marker{\vgap\ding{118}\xspace}
\def\na{--}
\def\unsure{?}
\def\missing{$!$}
\newcommand{\yes}{\ding{51}}
\newcommand{\no}{\ding{55}}
\DeclareRobustCommand\pie[1]{
\tikz[every node/.style={inner sep=0,outer sep=0, scale=1.5}]{
\node[minimum size=1.5ex] at (0,-1.5ex) {}; 
\draw[fill=white] (0,-1.5ex) circle (0.75ex); \draw[fill=black] (0.75ex,-1.5ex) arc (0:#1:0.75ex); 
}}
\def\L{\pie{0}} 
\def\M{\pie{-180}} 
\def\H{\pie{360}} 


\algdef{SE}[Receiving]{Receiving}{EndReceiving}[1]{\textbf{upon
receiving}\ #1\ \algorithmicdo}{\algorithmicend\ \textbf{}}%
\algtext*{EndReceiving}
\newcommand\StateX{\Statex\hspace{\algorithmicindent}}
\algrenewcommand\textproc{}

\graphicspath{{figures/}}

\date{}

\title{
\sysname and \consname: A DAG-based Mempool and Efficient BFT Consensus
}

\ifdefined\cameraReady

\author{George Danezis}
\affiliation{\institution{Mysten Labs \& UCL}}

\author{Lefteris Kokoris-Kogias} 
\affiliation{\institution{IST Austria}}

\author{Alberto Sonnino}
\affiliation{\institution{Mysten Labs}}

\author{Alexander Spiegelman}
\affiliation{\institution{Aptos}}

\else
\author{}
\fi

\settopmatter{printfolios=false}
\settopmatter{printacmref=true}

\begin{abstract}

We propose separating the task of reliable transaction dissemination from transaction ordering, to enable
high-performance Byzantine fault-tolerant quorum-based consensus. We design and evaluate a mempool protocol, \sysname, specializing in high-throughput reliable dissemination and storage of causal histories of transactions. \sysname tolerates an asynchronous network and maintains high performance despite failures. 
\sysname is designed to easily scale-out using multiple workers at each validator, and we demonstrate that there is no foreseeable limit to the throughput we can achieve. 

Composing \sysname with a partially synchronous consensus protocol (\sysname-HotStuff) yields significantly better throughput even in the presence of faults or intermittent loss of liveness due to asynchrony.
However, loss of liveness can result in higher latency. To achieve overall good performance when faults occur we design \consname, a zero-message overhead asynchronous consensus protocol, to work with \sysname. We demonstrate its high performance under a variety of configurations and faults.

As a summary of results, on a WAN,  \sysname-Hotstuff achieves over 130,000 tx/sec at less than  2-sec latency compared with 1,800 tx/sec at 1-sec latency for Hotstuff. Additional workers increase throughput linearly to 600,000 tx/sec without any latency increase. \consname achieves 160,000 tx/sec with about 3 seconds latency. Under faults, both protocols maintain high throughput, but \sysname-HotStuff suffers from increased latency.

\end{abstract}

\begin{CCSXML}
<ccs2012>
   <concept>
       <concept_id>10002978.10003006.10003013</concept_id>
       <concept_desc>Security and privacy~Distributed systems security</concept_desc>
       <concept_significance>500</concept_significance>
       </concept>
 </ccs2012>
\end{CCSXML}

\ccsdesc[500]{Security and privacy~Distributed systems security}

\keywords{
Consensus protocol, Byzantine Fault Tolerant
}

\maketitle

\section{Introduction} \label{sec:introduction}



Byzantine consensus protocols~\cite{dolev1982efficient,bracha1987asynchronous,CastroL02} and the state machine replication paradigm~\cite{bessani2014state} for building reliable distributed systems have been studied for over 40 years. However, with the rise in popularity of blockchains there has been a renewed interest in engineering high-performance consensus protocols. Specifically, to improve on Bitcoin's~\cite{nakamoto2008bitcoin} throughput of only 4 tx/sec early works~\cite{DBLP:journals/corr/Kokoris-KogiasJ16} suggested committee based consensus protocols. For higher throughput and lower latency committee-based protocols are required, and are now becoming the norm in proof-of-stake designs.


Existing approaches to increasing the performance of distributed ledgers focus on creating lower-cost consensus algorithms culminating with Hotstuff~\cite{DBLP:conf/podc/YinMRGA19}, which achieves linear message complexity in the partially synchronous setting. To achieve this, Hotstuff leverages a leader who collects, aggregates, and broadcasts the messages of other validators.
However, theoretical message complexity should not be the only optimization target. More specifically:
\begin{itemize}[noitemsep,topsep=0pt]
\item Any (partially-synchronous) protocol that minimizes overall message number, but relies on a leader to produce proposals and coordinate consensus, fails to capture the high load this imposes on the leader who inevitably becomes a bottleneck.

\item Message complexity counts the number of \emph{metadata} messages (e.g., votes, signatures, hashes) which take minimal bandwidth compared to the dissemination of bulk transaction data (blocks). Since blocks are orders of magnitude larger (10MB) than a typical consensus message (100B), the asymptotic message complexity is practically amortized for fixed mid-size committees (up to $\sim50$ nodes).
\end{itemize}

Additionally, consensus protocols have grouped a lot of functions into a monolithic protocol.
In a typical distributed ledger, such as Bitcoin or LibraBFT\footnote{LibraBFT recently renamed to DiemBFT.}~\cite{baudet2019state}, clients send transactions to a validator that shares them using a \mempool protocol. Then a subset of these transactions are periodically re-shared and committed as part of the consensus protocol.
Most research so far aims to increase the throughput of the consensus layer.

This paper formulates the following hypothesis: \textbf{a better \mempool, that reliably distributes transactions, is the key enabler of a high-performance ledger. It should be separated from the consensus protocol altogether, leaving consensus only the job of ordering small fixed-size references. This leads to an overall system throughput being largely unaffected by consensus throughput.}

This work confirms the hypothesis; monolithic protocols place transaction dissemination in the critical path of consensus, 
impacting performance more severely than consensus itself. With \sysname, we show that we can off-load \emph{reliable} transaction dissemination to the \mempool protocol, and only rely on consensus to sequence a very small amount of metadata, increasing performance significantly.
Therefore, there is a clear gap between what in theory is asymptotically optimal for an isolated consensus protocol and what offers good performance in a real distributed ledger.

Prior work into closing this gap both in permissionless~\cite{graphene, prism} and permissioned protocols~\cite{CastroL02} focuses on separating the transmission of large messages and metadata but not on guaranteeing reliability. 
As a result, they work exceptionally well under no faults but suffer severely at the slightest network failure.
In order to evaluate this existing approach, we adapt Hotstuff to separate block dissemination into a separate \mempool layer and call the resulting system Batched-HS (\Cref{sec:implementation}). In Batched-HS, validators broadcast blocks of transactions in a \mempool and the leader proposes block hashes during consensus, instead of broadcasting transactions in the critical path, gaining up to 50x in performance over the existing implementation.
This design, however, only performs well under ideal network conditions where the proposal of the leader is available to the majority of the validators promptly.
To make a robust \mempool we design \sysname, a DAG-based, structured \mempool which implements causal order reliable broadcast of transaction blocks, exploiting the available resources of validators in full. Combining the \sysname \mempool with HotStuff (\sysname-HS) provides good throughput even under faults or unstable network conditions (but at an inevitable higher latency).
%
To reduce latency under faults and asynchrony, we can extend \sysname with a random coin to provide asynchronous consensus, which we call \consname.
\consname is a fully-asynchronous, wait-free consensus where each party decides the agreed values by examining its local DAG \textit{without sending any additional messages}. 

\begin{figure}[t]
\vspace{-0.3cm}
    \centering
    \includegraphics[width=\columnwidth]{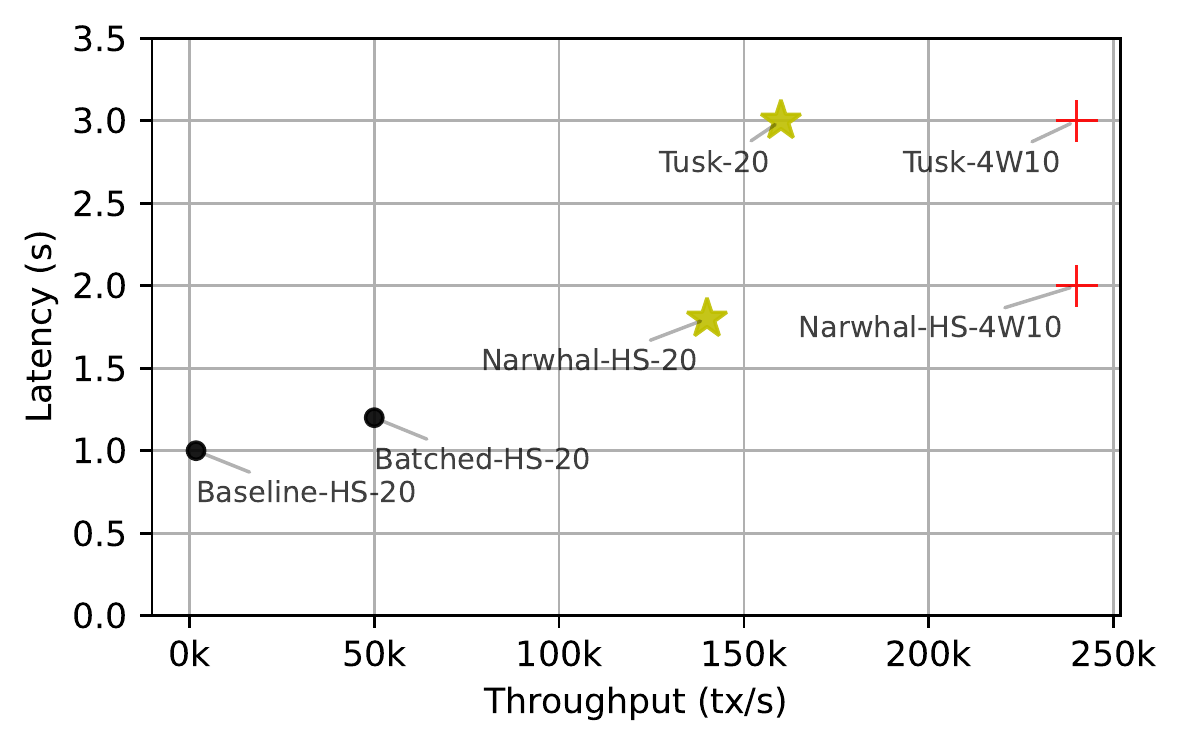}
    \caption{Summary of WAN performance results, for consensus systems with traditional mempool (circle), \sysname mempool (star), and many workers (cross). Transactions are 512B.}
            \vspace{-0.4cm}
    \label{fig:summary}
\end{figure}

\para{Contributions}
We make the following contributions:
\begin{itemize}[noitemsep,topsep=0pt]
    \item We build \sysname, an advanced \mempool protocol that guarantees optimal throughput (based on network speed) even under asynchrony and combines it with our Hotstuff implementation to see increased throughput at a modest expense of latency.
    \item We leverage the structure of \sysname and enhance it with randomness to get Tusk, a practical extension of DAG-Rider~\cite{DBLP:journals/corr/abs-2102-08325}. Tusk is a high-throughput, DDoS resilient, and zero overhead consensus protocol. We demonstrate experimentally its high performance in a WAN, even when faults occur.
\end{itemize}

\Cref{fig:summary} summarizes the relative WAN performance of the \sysname-based systems (star markers), compared with HotStuff (circle marker), when no faults occur, for different numbers of validators and workers (cross marker, number of workers after `W'). Throughput (x-axis) is increased with a single \sysname worker and vastly increased when leveraging the parallelizable nature of \sysname to a throughput of over 500,000 tx/sec, for a latency (y-axis) lower than 3.5 seconds. 


\section{Overview} \label{sec:overview}

This paper presents the design and implementation of \sysname, a
DAG-based \emph{\mempool abstraction}. 
%
\sysname ensures efficient wide
availability and integrity of user-submitted transactions under a fully
asynchronous network. 
It is a structured, persistent, Byzantine
fault-tolerant distributed storage that provides availability as well
as a partial order on blocks of transactions. 
In this section we define
the problem \sysname addresses, its system and security model, as well as a high-level overview and the main engineering challenges.

\subsection{System model, goals and assumptions}

We assume a message-passing system with a set of $n$ parties and 
a computationally bounded adversary that controls the network and can
corrupt up to $f < n/3$ parties.
We say that parties corrupted by the adversary are \emph{Byzantine} or \emph{faulty} and
the rest are \emph{honest} or \emph{correct}.
To capture real-world networks we assume asynchronous \emph{eventually
reliable} communication links among honest parties.
That is, there is no bound on message delays and there is a finite but unknown number of messages that can be lost.

Informally the \sysname \mempool exposes to all participants, a \emph{key-value} block store abstraction that can be used to read and write blocks of transactions and extract partial orders on these blocks. Nodes maintaining the \mempool are able to use the short key to refer to values stored in the shared store and convince others that these values will be available upon request by anyone.
The \sysname \mempool uses a round-based DAG structure that we describe in detail in the next sections. We first provide a formal definition of the \sysname \mempool. 

A \emph{block} $b$ contains a list of transactions and a list of
references to previous blocks. The unique 
(cryptographic) digest of its
contents, $d$, is used as its identifier to reference the block.
Including in a block, a reference encodes a causal
`happened-before' relation between
the blocks (which we denote $b \rightarrow b'$). The ordering of
transactions and references within the block also explicitly encodes
their order, and by convention, we consider all referenced blocks
happened before all transactions in the block.

Our \mempool abstraction supports a number of operations: 
A \emph{write(d,b)} operation stores a block $b$ associated with its
digest (key) $d$.
The returned value \emph{c(d)} represents an unforgeable
\emph{certificate of availability} on the digest $d$ and we say that the write \emph{succeeds} when \emph{c(d)} is formed.
A \emph{valid(d, c(d))} operation returns true if the certificate is valid, and false if it is not.
A \emph{read(d)} operation returns a block $b$ if a write(d,b) has succeeded.
A \emph{read\_causal(d)} returns a set of blocks $B$ such that $\forall
b' \in B \quad b' \rightarrow \ldots \rightarrow \text{read}(d)$, i.e.,
for every $b' \in B$, there is a transitive happened before relationship with $b$.

The \sysname \mempool satisfies the following properties:

\begin{itemize}[noitemsep,topsep=0pt]
     \item \textbf{Integrity:} For any certified digest $d$ every two
     invocations of \emph{read(d)} by honest parties that return a value, return the same value.

    \item \textbf{Block-Availability:} If a read operation \emph{read(d)} is invoked by an honest party after \emph{write}$(d,b)$ succeeds for an honest party, the \emph{read(d)} eventually completes and returns $b$.
    
    
    
     \item    \textbf{Containment:} Let $B$ be the set returned by a
    \emph{read\_cau\-sal(d)} operation, then for every $ b' \in B$, the
    set $B'$ returned by \emph{read\_causal(d')}, $B' \subseteq B$.
    
\item    \textbf{$2/3$-Causality:} A successful  \emph{read\_causal(d)} returns a set $B$ that contains at least $2/3$ of the blocks written successfully before \emph{write}$(d,b)$ was invoked.
   
       \item \textbf{$1/2$-Chain Quality} At least $1/2$ of the blocks in the returned set $B$ of a successful \emph{read\_causal(d)} invocation were  written by honest parties.

\end{itemize}
The Integrity and Block-Availability properties of \sysname allow us to clearly separate data dissemination from consensus.
That is, with \sysname, the consensus layer only needs to order block digest certificates, which which have a small size. 
Moreover, the Causality and Containment properties guarantee that any consensus protocol that leverages \sysname (even partially synchronous) achieves high-throughput despite periods of asynchrony.
This is because once we agree on a block digest, eg.\ when synchrony is restored, we can safely totally order all its causally ordered blocks created during the periods of asynchrony.
Therefore, \emph{with \sysname, different Byzantine consensus protocols mostly differ in the latency they achieve under different network conditions}, as we examine in Section~\ref{sec:evaluation}.
The Chain-Quality~\cite{DBLP:conf/eurocrypt/GarayKL15} property allows \sysname to be used by Blockchains providing censorship resistance.

Last but not least, our mempool abstraction can scale out and support the increasing demand in consensus services. 
Therefore, we aim to achieve the above theoretical properties and at the time satisfy the following:

\begin{itemize}[noitemsep,topsep=0pt]
    
    \item \textbf{Scale out:} \sysname's throughput increases linearly with the number of resources each validator has while the latency does not suffer.
    
\end{itemize}

\subsection{Intuitions behind the \sysname design }\label{sec:roadmap}

Established blockchains~\cite{nakamoto2008bitcoin,baudet2019state}
implement a best-effort gossip \mempool. A transaction submitted to one validator
is gossiped to all others. This leads to fine-grained double transmissions: most transactions are
shared first by the \mempool, and then the
miner/leader creates a block that re-shares them. In this section we extend step-by-step this basic design towards \sysname 
to (i) reduce the need for double transmission when leaders propose blocks, and (ii) enable scaling out when more resources are available.

A first step 
is to broadcast blocks instead of transactions and let the leader propose a hash of a block, relying on the \mempool layer to provide its \textbf{integrity-protected} content. 
However, validators also need to ensure hashes represent available blocks, requiring them to download them before certifying a block -- within the critical path of the consensus algorithm.

To ensure \textbf{availability}, as a second step, we consistently broadcast~\cite{cachin2011introduction} the block, 
resulting in a certificate that the block will be available for download. A leader proposes a certificate, 
which is short and proves the block will be available. 
However, one
certificate per \mempool block has to be included, and if the consensus temporarily loses
liveness then the number of certificates to be committed may grow indefinitely.

As a third step, we add \textbf{causality} to propose
a \emph{single certificate for multiple \mempool blocks}: 
\mempool blocks include certificates of past \mempool blocks, 
from all validators. As a result, a certificate refers, to a
block, and its full causal history. 
A leader proposing such a fixed-size
certificate, therefore, proposes an extension to the sequence containing
blocks from its full history. This design is extremely
economical of the leader's bandwidth, and ensures that delays in
reaching consensus impact latency but not average throughput--as
mempool blocks continue to be produced and are eventually
committed. Nevertheless, two issues remain: \first A very fast
validator may force others to perform large downloads by
generating blocks at a high speed; \second 
honest validators may not get enough bandwidth to share their blocks
with others -- leading to potential censorship.

A fourth step provides \textbf{Chain Quality} by imposing restrictions on block creation rate. 
Each block from a validator contains a round number, and must include a quorum of certificates from the previous round to be valid.
 As a result, a fraction of honest validators' blocks are included in any proposal. Additionally, a validator cannot advance to a \mempool round
 before some honest ones concluded the previous
 round, preventing flooding. As a result \sysname provides the consensus layer censroship-resistence (as defined in HoneyBadger BFT~\cite{miller2016honey}) without the need for using any additional mechanisms such as threshold encryption. Hence, our Narwhal-Hotstuff system is the only partially synchronous quorum based protocol that provides censorship-resistance. The adversary cannot kill leaders because it does not like the proposal as all proposals include at lest 50\% of honest transactions, even the ones from Byzantine leaders.
 
 The final fifth design step is that of enabling \textbf{scale-out}. 
 Instead of having a single machine creating \mempool blocks, multiple worker machines per validator can share \mempool sub-blocks, called \emph{batches}. One primary integrates references to them in \mempool primary blocks. This enables validators to commit a mass of computational, 
 storage, and networking resources to the task of sharing transactions--allowing for quasi-linear scaling.

\sysname is the culmination of the above five design steps, evolving the basic \mempool design to a robust and performant data dissemination and availability layer. \sysname can be used to off-load the critical path of traditional consensus protocols such as HotStuff or leverage the \textbf{containment property} to perform fully asynchronous consensus. We next study \sysname in more detail.

\section{\sysname Core Design} \label{sec:design}

In Section~\ref{sec:dag} we present the core protocol for a mempool and then in Section~\ref{sec:consensus} we show how to use it to get high-throughput consensus. Finally, in ~\Cref{sec:gc} we address the main roadblock of previous DAG designs, garbage collection.

\subsection{The \sysname \mempool}\label{sec:dag}

 \begin{figure}[t]
     \centering
     \includegraphics[width=0.85\columnwidth]{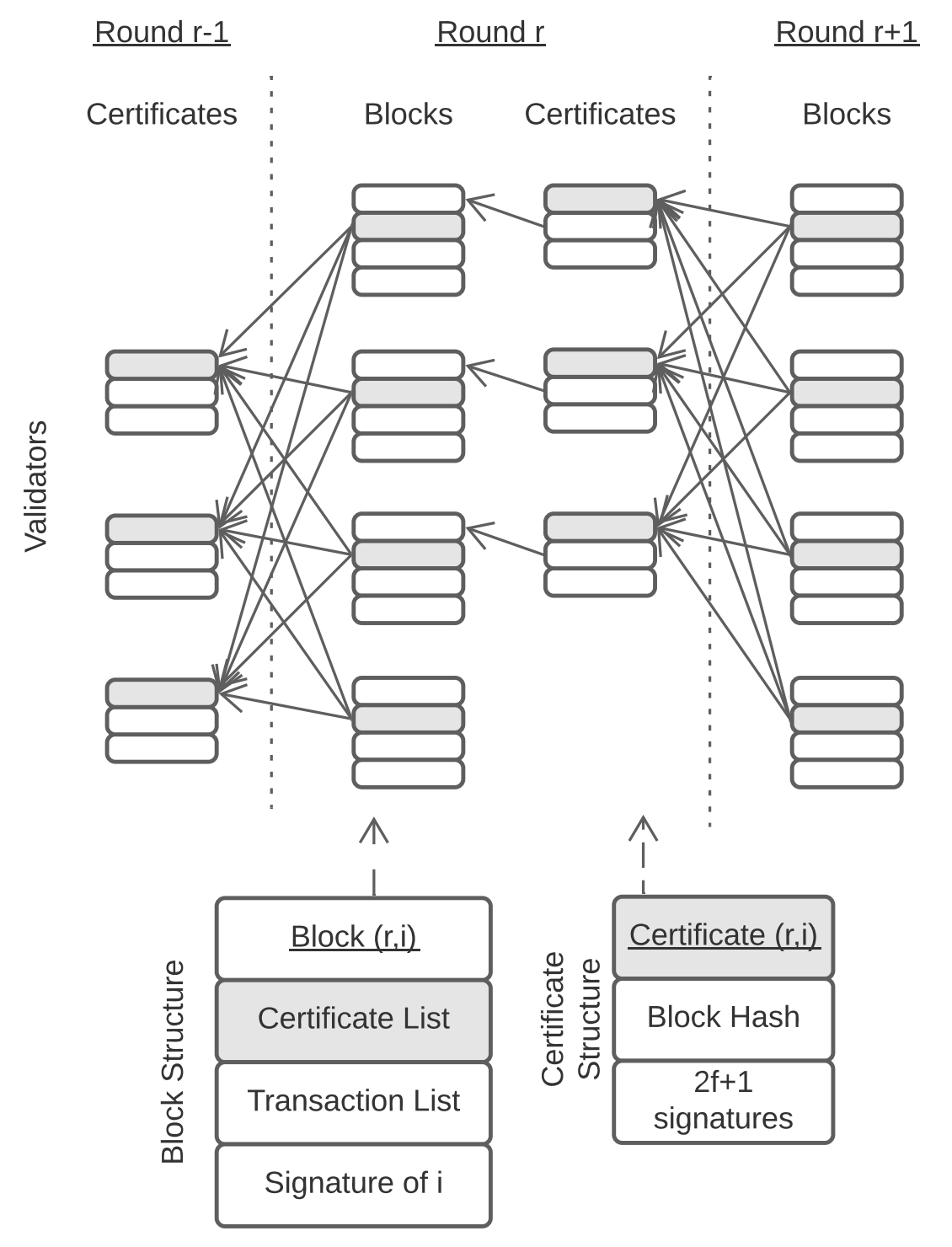}
     \caption{ Three rounds of \sysname. In round $r-1$ there are enough ($N-f$) certified blocks, and validators start building blocks for round $r$. Each includes a batch of transactions and $N-f$ certificates for round $r-1$. Blocks for $r$ can only include certificates from the previous round. 
     Once a validator has a ready block it broadcasts it to others to form a certificate and then shares the certificate with all other validators to include at round $r+1$.}
     \label{fig:dag}
\end{figure}

The \sysname \mempool is based on ideas from reliable broadcast~\cite{bracha1985asynchronous} and reliable storage~\cite{DBLP:conf/sosp/Abd-El-MalekGGRW05}. It additionally uses a Byzantine fault-tolerant version of Threshold clocks~\cite{tlc} as a pacemaker to advance rounds. An illustration of \sysname operation, forming a block DAG, can be seen in Figure~\ref{fig:dag}.





  
Each validator maintains the current local round $r$, starting at zero. Validators continuously receive transactions from clients and accumulate them into a transaction list (see Fig.~\ref{fig:dag}). They also receive certificates of
availability for blocks at $r$ and accumulate them into a certificate list.

Once certificates for round $r-1$ are accumulated from $2f+1$ distinct validators, a validator moves the local round to $r$, creates, and broadcasts a block for the new round. Each block includes the identity of its creator, and local round $r$, the current list of transactions and certificates from $r-1$, and a signature from its creator. Correct validators only create a single block per round.

The validators \emph{reliably broadcast}~\cite{bracha1985asynchronous} each block they create to ensure integrity and availability of the block.
For practical reasons we do not implement the standard push strategy that requires quadratic communication, but instead use a pull strategy to make sure we do not pay the communication penalty in the common case (we give more details in Section~\ref{sec:qb-rbc}).
In a nutshell, the block creator sends the block to all
validators, who check if it is \emph{valid} and then reply with their
signatures. 
A valid block must
\begin{enumerate}[noitemsep,topsep=1pt]
    \item contain a valid signature from its creator,
    \item be at the local round $r$ of the validator checking it,
    \item be at round $0$ (genesis), or contain certificates for at least $2f+1$ blocks of round $r-1$,
    \item be the first one received from the creator for round $r$.
\end{enumerate}
If a block is valid the other validators store it and acknowledge it by signing its block digest, round number, and creator's identity. We note that condition (2) may lead to blocks with an older logical time being dismissed by some validators. However, blocks with a future round contain $2f+1$ certificates that ensure a validator advances its round into the future and signs the newer block.
Once the creator gets $2f+1$ distinct acknowledgments for a
block, it combines them into a \emph{certificate of block availability
}, that includes the block digest, current round, and creator
identity.
Then, the creator sends the certificate to all other validators so that they can include it in their next block. 

The system is initialized through all validators creating and certifying empty blocks for round $r=0$. These blocks do not contain any transactions and are valid without reference to certificates for past blocks.

\vspace{3mm}
\noindent \textbf{Intuitions behind security argument.}
A certificate of availability includes $2f+1$ signatures, ie.\ at least $f+1$ honest validators have checked and stored the block.
Thus, the block is available for retrieval when needed to sequence transactions. 
Further, since honest validators have checked the conditions before signing the certificate, quorum intersection ensures Block-Availability and Integrity (i.e., prevent equivocation) of each block. Since a block contains references to certificates
previous rounds, we get by an inductive argument that all blocks in the
causal history are certified and available, satisfying causality\footnote{More complete security proofs for all properties of a mempool are provided in Appendix~\ref{sec:proofs}.}.

\subsection{Using \sysname for consensus}\label{sec:hs-over-dag}
\label{sec:consensus}

\Cref{fig:consensus} illustrates how \sysname can be combined with an eventually synchronous consensus protocol\footnote{The network assumption for this protocol to work is stronger than what \sysname requires.} (top) to enable high-throughput total ordering of transaction, which we elaborate on in \Cref{sec:hotstuff}; \sysname can also be augmented with a random coin mechanism to yield \consname (\Cref{fig:consensus}, bottom), a high-performance asynchronous consensus protocol we present in \Cref{sec:asynch}. Table~\ref{table:comp} summarizes the theoretical comparison of vanilla Hotstuff with \sysname-based systems, which we validate in our evaluation.

\newcolumntype{L}{>{\centering\arraybackslash}m{2.5cm}}
\begin{table}[t]
\small
\centering
\begin{tabular}{Lccc}
    \toprule 
    & HS  & \sysname-HS & \consname  \\
    \midrule
    
    Average-Case (Latency) & $3$ & $4$ & $4.5$ \\ 
    \addlinespace[0.5em]

    Worse-Case $f$  (Crashes Latency) & $O(n)$ & $O(n)$ & $4.5$ \\
    \addlinespace[0.5em]
    
    Asynchronous (Latency) & N/A & N/A & $7$  \\
    \addlinespace[0.5em]
    
    Unstable Network Throughput  & \xmark  & \cmark &  \cmark \\
    \addlinespace[0.5em]
    
    Asynchronous Throughput & \xmark & \xmark & \cmark \\
    \bottomrule
\end{tabular}
\caption{A comparison between Hotstuff and our protocols. We measure latency in RTTs (or certificates). Unstable Network is a network that allows for one commit between periods of asynchrony. By \cmark we mean the throughput is the same as it would be under synchrony}\label{table:comp}
\vspace{-0.8cm}
\end{table}

\subsubsection*{\sysname-Hotstuff.}\label{sec:hotstuff}

Consensus algorithms operating in partial synchrony, such as Hotstuff~\cite{DBLP:conf/podc/YinMRGA19} or LibraBFT~\cite{baudet2019state}, can leverage \sysname to improve their throughput. Such systems have a leader who proposes a block of transactions that is certified by other validators. Instead of proposing a block of transactions, a leader can propose one or more certificates of availability created in \sysname. Upon commit, the full uncommitted causal history of the certificates is deterministically ordered and committed. 
\sysname guarantees that given a certificate all validators see the same causal history, which is itself a DAG over blocks. 
As a result, any deterministic rule over this DAG leads to the same total ordering of blocks for all validators, achieving consensus.
Additionally, thanks to the availability properties of \sysname all committed blocks can be retrieved and transactions sequenced.

There are many advantages to leaders using \sysname over sending a block of transactions directly. Even in the absence of failures, a leader broadcasting transactions leads to uneven use of resources: the round leader has to use an enormous amount of bandwidth, while bandwidth at every other validator is underused.
In contrast, \sysname ensures bulk transaction information is efficiently and evenly shared at all times, leading to better network utilization and throughput.

Eventually-synchronous consensus protocols cannot provide liveness
during asynchrony periods or when leaders are Byzantine.
Therefore, with a naive mempool implementation, overall consensus throughput
goes to zero during such periods.
\sysname, in contrast, continues to share blocks and form certificates of availability even under asynchronous
networks, so blocks are always certified at maximal throughput.
Once the
consensus protocol manages to commit a digest, validators also commit
its causal history, with no gaps for periods of asynchrony. Nevertheless, an eventually synchronous protocol
still forfeits liveness during periods of asynchrony, leading to
increased latency. We show how to overcome this problem with \consname.

 \begin{figure}[t]
     \centering
     \includegraphics[width=\columnwidth]{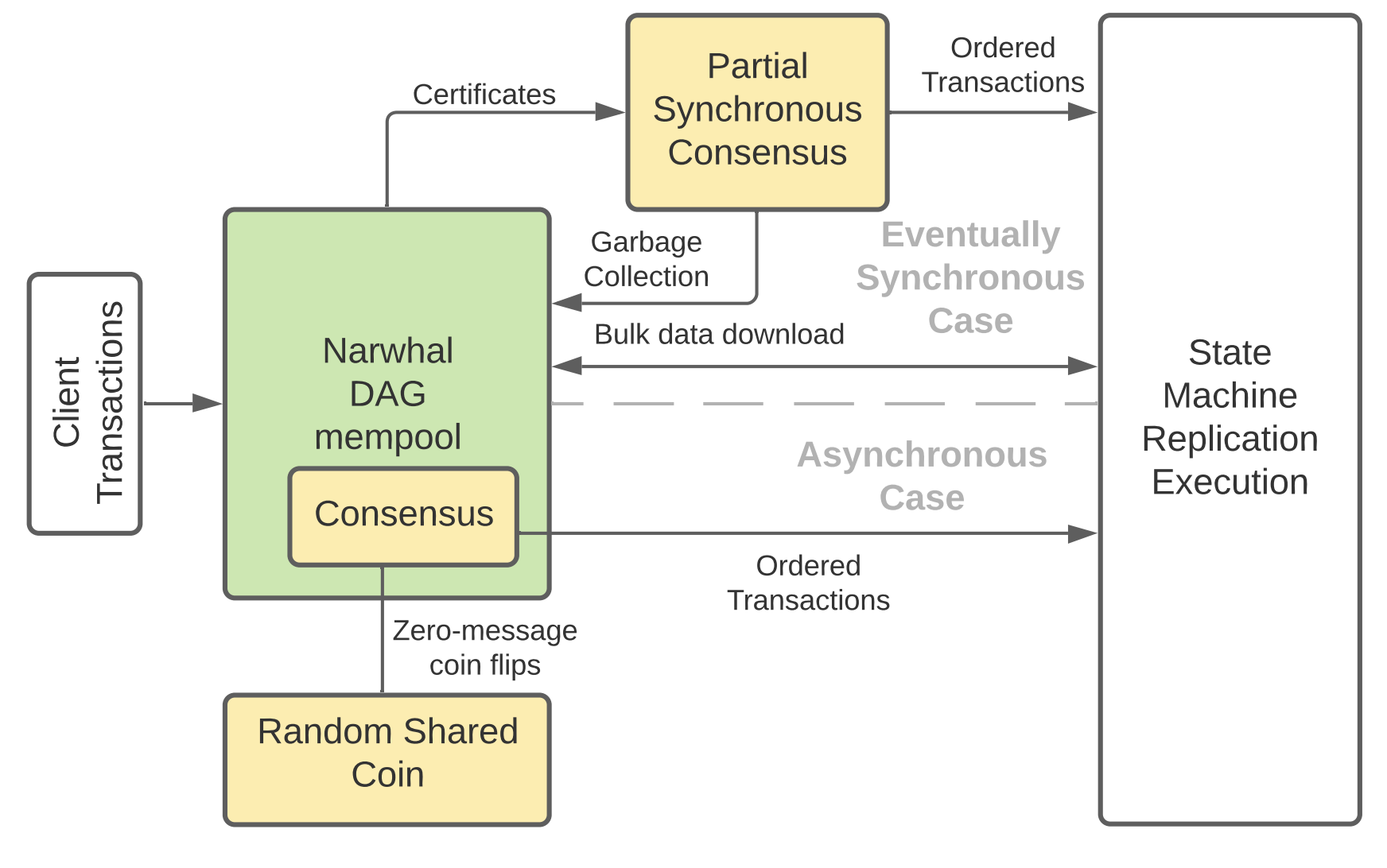}
     \caption{ Any consensus protocol can execute over the mempool by occasionally ordering certificates to \sysname blocks. \sysname guarantees their availability for the SMR execution.
     Alternatively, \sysname structure can be interpreted as an asynchronous consensus protocol with the (zero-message cost) addition of a random-coin.}
     \label{fig:consensus}
\end{figure}

\subsection{Garbage Collection}\label{sec:gc}

Another theoretical contribution of \sysname paired with any consensus algorithm is lifting one of the
main roadblocks to the adoption of DAG-based consensus algorithms (e.g.,
Hashgraph~\cite{baird2016swirlds}), garbage collection\footnote{
A bug in our garbage collection led to exhausting 120GB of RAM in minutes compared to 700MB memory footprint of \sysname.}. This challenge stems from
the fact that a DAG is a local structure and although it will
eventually converge to the same version in all validators there is no
guarantee on when this will happen. As a result, validators may
have to keep all blocks and certificates readily accessible
to (1) help their peers catch up and (2) be able to process arbitrary
old messages.

This is not a problem in \sysname, where we impose a strict round-based
structure on messages.
This restriction allows validators to decide on the validity of a block
only from information about the current round (to ensure uniqueness of
signed blocks).
Any other message, such as certified blocks, carries enough information
for validity to be established only with reference to cryptographic
verification keys.
As a result, validators in \sysname are not required to examine the
entire history to verify new blocks.
However, note that if two validators garbage collect different rounds
then when a new block $b$ is committed, validators might disagree on
$b$'s causal history and thus totally order different
histories.
To this end, \sysname leverages the properties of a consensus protocol
(such as the one we discuss in the previous section) to agree on the
garbage collection round.
Blocks from earlier rounds can be safely be stored off the main validator
and all later messages from previous rounds can be safely ignored. 

At a first sight this might look like an adversary can censor transactions by delaying them enough for the garbage collection to kick in and remove them from being actively sent. This is not true, an honest node that garbage collect an old round that didn't make it into the DAG re-inject transactions to a later round. Hence although the actual DAG blocks might be censored, eventually all transactions will be included in blocks and due to the 1/2-Chain Quality property they will be added to the consensus output within constant rounds.

All in all, validators in \sysname can operate with a fixed size
memory.
That is, $O(n)$ in-memory usage on a validator, containing blocks and
certificates for the current round, is enough to operate correctly.
Since certificates ensure block availability and integrity,
storing and servicing requests for blocks from previous rounds can be
offloaded to a passive and scalable distributed store or an external
provider operating a Content Distribution Network (CDN) such as
Cloudflare or S3. Protocols using the DAG content as a mempool for
consensus can directly access data from the CDN after sequencing to
enable execution of transactions using techniques from deterministic 
databases~\cite{abadi2018overview}.




\section{Building a Practical System}

In this section, we discuss two key practical challenges we had to address in order to enable \sysname to reach its full theoretical potential.

\subsection{Quorum-based reliable broadcast}\label{sec:qb-rbc}

In real-world reliable channels, like TCP, all state is lost and re-transmission ends if a connection drops. Theoretical reliable broadcast protocols, such as double-echo~\cite{cachin2011introduction}, rely on perfect point-to-point channels that re-transmit the same message forever, or at least until an acknowledgment, requiring unbounded memory to store messages at the application level. Since some validators may be Byzantine, acknowledgments cannot mitigate the denial-of-service risk. 

To avoid the need for perfect point-to-point channels we take advantage of the fault tolerance and the replication provided by the quorums we rely on to construct the DAG.
In the \sysname implementation each validator broadcasts a block for each round $r$: Subject to conditions specified, if $2f+1$ validators receive a block, they acknowledge it with a signature. $2f+1$ such signatures form a certificate of availability, that is then shared, and potentially included in blocks at round $r+1$. Once a validator advances to round $r+1$ it \emph{stops re-transmission and drops all pending undelivered messages} for rounds smaller than $r+1$.

A certificate-of-availability does not guarantee the totality property\footnote{The properties of reliable broadcast are Validity, No duplication, Integrity, Consistency, and Totality (see page 112 and 117 of~\cite{cachin2011introduction}).} needed for reliable broadcast: it may be that some honest nodes receive a block but others do not. However, if a block at round $r+1$ has a certificate-of-availability, the totality property can be ensured for all $2f+1$ blocks with certificates it contains for round $r$. Upon, receiving a certificate for a block at round $r+1$ validators can request all blocks in its causal history from validators that signed the certificates: since at least $f+1$ honest validators store each block, the probability of receiving a correct response grows exponentially after asking a handful of validators. This \emph{pull} mechanism is DoS resistant and efficient: At any time only $O(1)$ requests for each block are active, and all pending requests can be dropped after receiving a correct response with the sought block. This happens within $O(1)$ requests on average, unless the adversary actively attacks the network links, requiring $O(n)$ requests at most, which matches the worst-case theoretical lower bound~\cite{dolev1985bounds}. 

The combination of block availability certifications, their inclusion in subsequent blocks, 
and a `pull mechanism' to request missing certified blocks leads to a reliable broadcast protocol. 
Storage for re-transmissions is bounded by the time it takes to advance a round and the time it takes to retrieve a 
certified block -- taking space bounded by $O(n)$ in the size of the quorum (with small constants).

\subsection{Scale-Out Validators}

 \begin{figure}[t]
     \centering
     \includegraphics[width=0.7\linewidth]{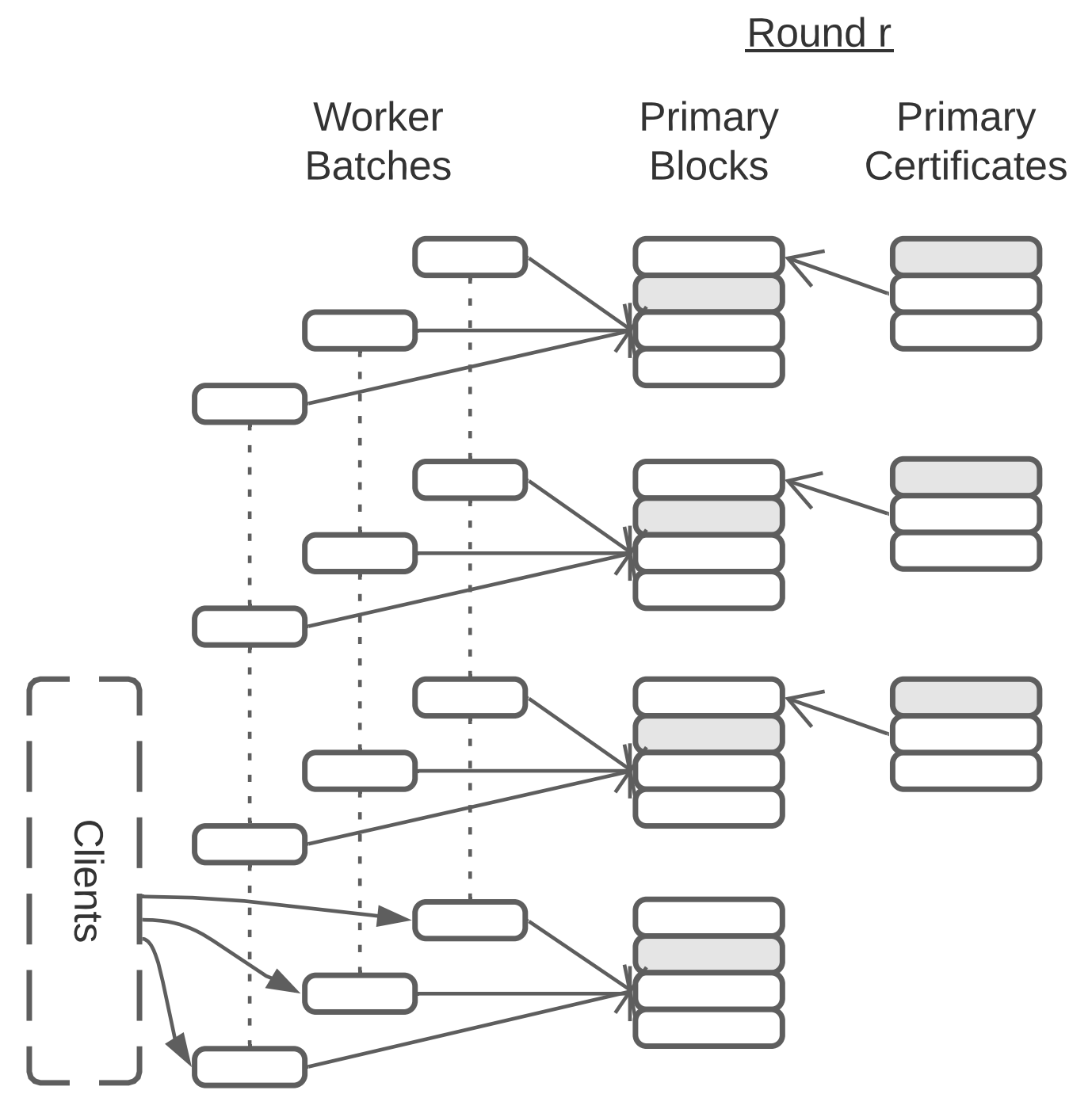}
          \vspace{-0.3cm}
     \caption{ Scale-Out Architecture in \sysname. Within each validator, there is one primary machine that handles the meta-data of building the DAG and multiple worker machines (3 shown in the example) each one streaming Transactions batches to other validators. The hashes of the batches are sent to the primary that includes them in the next block. Clients send transactions to worker machines at all validators. }
     \label{fig:sharding}
     \vspace{-0.5cm}
\end{figure}

In a consensus system all correct validators eventually need to receive all sequenced transactions. \sysname, like any other mempool, is therefore ultimately limited by the bandwidth, storage, and processing capabilities of a single validator.  However, a validator is a unit of authority and trust, and does not have to be limited to employing the resources of a single computer. We, therefore, adapt \sysname to use many computers per validator, to achieve a scale-out architecture.


\vspace{2mm}
\noindent \textbf{Core Scale-Out Design.}
We adapt \sysname to follow a simple primary-worker architecture as seen in Figure~\ref{fig:sharding}. We split the protocol messages into transaction data and metadata. Transferring and storing transaction data is an `embarrassingly parallel' process: A load balancer ensures transactions data are received by all workers at a similar rate; a worker then creates a batch of transactions, and sends it to the worker node of each of the other validators; once an acknowledgment has been received by a quorum of these, the cryptographic hash of the batch is shared with the primary of the validator for inclusion in a block.

The primary runs the \sysname protocol as specified, but instead of including transactions into a block, it includes cryptographic hashes of its own worker batches. The validation conditions for the reliable broadcast at other validators are also adapted to ensure availability: a primary only signs a block if the batches included have been stored by its own workers. This ensures, by induction, that all data referred to by a certificate of availability can be retrieved. 

A pull mechanism has to be implemented by the primary to seek missing batches: upon receiving a block that contains such a batch, the primary instructs its worker to pull the batch directly from the associated worker of the creator of the block. This requires minimal bandwidth at the primary. The pull command only needs to be re-transmitted during the duration of the round of the block that triggered it, ensuring only bounded memory is required.

\vspace{2mm}
\noindent \textbf{Streaming.} Primary blocks are much smaller when hashes of batches instead of transactions are included. Workers constantly create and share batches in the background. Small batches, in the order of a few hundred to a few thousand transactions (approximately 500KB), ensure transactions do not suffer more than some maximum latency. As a result, most of the batches are available to other validators before primary blocks arrive. This reduces latency since (1) there is less wait time from receiving a primary block to signing it and (2) while waiting to advance the round (since we still need $2f+1$ primary blocks) workers continue to stream new batches to be included in the next round's block.

\vspace{2mm}
\noindent \textbf{Future Bottlenecks.} At high transaction rates, a validator may increase capacity through adding more workers, or increasing batch sizes. Yet, eventually, the size of the primary blocks will become the bottleneck, requiring a more efficient accumulator such as a Merkle Tree root batch hashes. 
This is a largely theoretical bottleneck, and in our evaluation we never managed to observe the primary being a bottleneck. As an illustration: a sample batch size of 1,000 transactions of  512B each, is 512KB. The batch hash within the primary block is a minuscule 32B and 8B for meta-data (worker ID), i.e.\ 40B. This is a volume reduction ratio of 1:12, and we would need about 12,000 workers before the primary handles data volumes similar to a worker. So we leave the details of more scalable architectures as distant future work.

\section{\consname asynchronous consensus}\label{sec:asynch}

In this section, we present \consname, an asynchronous consensus algorithm for \sysname that remains live under 
asynchrony or DDoS attacks. \consname's theoretical starting point is DAG-Rider~\cite{DBLP:journals/corr/abs-2102-08325}, from which it inherent its safety guarantees. \consname modifies DAG-Rider into an implementable system and improves its latency in the common case. 

\consname validators operate a \sysname mempool as described in \Cref{sec:dag}, but also include in each of their blocks information to generate a 
distributed perfect random coin. Such a coin can be constructed from an adaptively secure
threshold signature scheme~\cite{DBLP:journals/joc/BonehLS04} for which a key setup can also be performed under full asynchrony~\cite{DBLP:conf/ccs/Kokoris-KogiasM20}.
The \consname algorithm gets the causally ordered DAG constructed by \sysname and totally orders its blocks with zero extra communication.
That is, every validator locally interprets its local view of the DAG and use the shared randomness to determine the total order.

To interpret the DAG, validators divide it into \emph{waves}, each of which consists of 3 consecutive rounds.
Conceptually, in the first round of a wave each validator proposes its block (and consequently all its causal history); in the second round each validator votes on the proposals by including them in their block; and in the third round validators produce randomness to elect one random leader's block in retrospect. 
Once the random coin for the wave is revealed, each validator $v$ commits the elected leader block $b$ of a wave $w$ if there are $f+1$ blocks in the second round of $w$ (in $v$'s local view of the DAG) that refer to $b$.
In which case, $v$ then carefully orders $b$'s causal history up to the garbage collection point.
However, since validators may obtain different local views of the DAG, some may commit a leader block in a wave while others may not.

To guarantee that all honest validators eventually order the same block leaders, after committing the leader in a wave $w$, $v$ sets it to be the next \emph{candidate} to be ordered and recursively go back to the last wave $w'$ in which it committed a leader. 
For every wave $i$ in between $w'$ and $w$, $v$ checks whether there is a path between the candidate leader and the leader block $b$ of wave $i$. In case there is such a path $v$ orders $b$  before the current candidate, set the correct candidate to be $b$ and continue the recursion.
This mechanism's safety argument is similar to DAG-Rider~\cite{DBLP:journals/corr/abs-2102-08325}, and we provide here only an intuition.

Our theoretical main difference from DAG-Rider is the termination guarantee.
While both protocols commit a block leader every constant number of rounds in expectation, \consname's constants are better in the common case.
By common case we mean networks where message delays are randomly distributed rather then controlled by the adversary.
Specifically, DAG-Rider's waves consist of 4 rounds, and thus each block in the DAG is committed in expectation every $5.5$ rounds in the common case.
In \consname we improve latency by considering waves that consists of 3 rounds.
In addition, as an optimization, we piggyback the the first round of wave with the third round (that produces randomness) of the preceding one.
As a result, each block in the DAG is committed in expectation every $4.5$ rounds in the common case.

\Cref{fig:coin} illustrates two 
waves of \consname -- the first wave is rounds 1-3, and the second is rounds 3-5 -- with 4 validators and $f=1$.
The elected leaders of waves 1 and 2 are determined at rounds 3 and 5, and we denote them $L_1$ and $L_2$, respectively. 
There are an insufficient number of blocks in round 2 (less than $f+1$) that refer to ("vote for") $L_1$ and thus $L_1$ is not committed when round 3 is interpreted. 
However $f+1 =2$ blocks in round 4 refer to $L_2$, and as a result $L_2$ is eventually committed. 
Since there is a path between $L_2$ and $L_1$, $L_1$ is ordered before $L_2$. 
Meaning that the sub-DAG
causally dependent on $L_1$ is ordered first (by some deterministic rule), and then the same rule is applied to the sub-DAG causally dependent on $L_2$.

It is noteworthy that neither the random
shared coin generation nor the consensus logic introduces any additional messages over \sysname. 
Therefore, \consname  has zero message
overhead, and the same theoretical throughput as \sysname.

\subsection{Safety intuition}

Each round in the DAG contains of at least $2f+1$ blocks, and since
any quorum of $2f+1$ blocks intersect with any quorum of
$f+1$ blocks we get:
\begin{restatable}{lem}{quorum}
\label{lem:quorum}
If an honest validator commits a leader block $b$ in an wave $i$,
then any leader block $b'$ committed by any honest validator $v$ in a
future wave have a path to $b$ in $v$'s local DAG.
\end{restatable}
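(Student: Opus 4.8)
The plan is to argue by induction on the wave number of $b'$, relative to $b$. Fix the wave $i$ in which the honest validator commits leader block $b$; by the commit rule there are at least $f+1$ blocks in round $2$ of wave $i$ (in that validator's local DAG) that reference $b$. The first thing I would establish is a ``persistence of votes'' fact: because any block at round $2$ of wave $i$ carries $2f+1$ certificates for round $1$ of wave $i$, and because \sysname guarantees (via Integrity/Block-Availability and the inductive causal-certification argument sketched in \Cref{sec:dag}) that all honest validators eventually see the same certified blocks and the same edges, those $f+1$ voting blocks — or at least enough of them — are part of \emph{every} honest validator's local DAG once it has advanced far enough. So the $f+1$-vote witness for $b$ is not a local artifact: any honest validator that reaches the relevant rounds sees it too.

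Next I would handle the base case: the first wave $i' > i$ in which some honest validator $v$ commits a leader $b'$. Here I want to show $b'$ has a path to $b$ in $v$'s DAG. The key combinatorial step is quorum intersection at the round where the votes live. Block $b'$ is a round-$1$-of-wave-$i'$ block that was committed, so there are $f+1$ blocks in round $2$ of wave $i'$ pointing to $b'$; more usefully, going \emph{down} from $b'$, its causal history includes a quorum of $2f+1$ round-$r$ certificates for every round $r$ strictly between the rounds of $b$ and $b'$ (this is exactly what condition (3) of block validity in \Cref{sec:dag} forces — every block includes $2f+1$ certificates from the previous round, so by transitivity the causal cone of $b'$ contains a $(2f+1)$-quorum at each intermediate round). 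Intersect that $2f+1$-quorum at round $2$ of wave $i$ with the $f+1$ blocks there that vote for $b$: the intersection is nonempty, so $b'$'s causal history contains a block that has an edge to $b$, hence $b' \rightarrow \dots \rightarrow b$. That settles the base case.

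For the inductive step, suppose the statement holds for all honest commits in waves strictly between $i$ and $i'$, and let $v$ commit $b'$ in wave $i'$. I would use the fact that between any two consecutive honest commits, the recursive ordering procedure described in \Cref{sec:asynch} walks back wave by wave; combined with the base-case argument applied to the nearest committed leader below $b'$, a path from $b'$ down to that leader exists, and by the induction hypothesis that leader has a path to $b$; concatenating gives a path from $b'$ to $b$ in $v$'s DAG (paths are preserved once the relevant blocks are present, which Block-Availability guarantees). I expect the main obstacle to be the ``no honest validator commits a \emph{conflicting} leader'' part — i.e., ruling out that some honest validator commits in wave $i'$ a leader $b'$ that does \emph{not} reach $b$, which is precisely where I need the $f+1$ voters for $b$ to block any alternative: any committed $b'$ drags in a $2f+1$-quorum at round $2$ of wave $i$, and that quorum must hit one of the $f+1$ $b$-voters, forcing the path. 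Making this airtight requires care about which rounds the intermediate quorums actually land on (the piggybacking of wave round $1$ onto the previous wave's round $3$ shifts indices), and about the garbage-collection boundary — I would assume $b$ is above the agreed GC round so that all the intermediate certificates are still retained, exactly as the GC discussion in \Cref{sec:gc} requires.
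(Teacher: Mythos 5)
Your proposal is correct and rests on the same key idea as the paper's proof: quorum intersection between the $f+1$ round-$2$ voters for $b$ and the $2f+1$ previous-round references that every later block must carry (the paper runs this forward from $b$ --- every block in the next round hits a voter, then induct --- whereas you run it backward from $b'$'s causal cone, but it is the same argument). Note that your ``base case'' already proves the claim for an arbitrary future leader $b'$, so the outer induction over intermediate committed leaders and the path-concatenation step are superfluous.
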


Given the above lemma, the recursive mechanism that is described above to order block leaders guarantees the following: 

\begin{restatable}{lem}{safety}
\label{lem:safty}

Any two honest validators commit the same sequence of block leaders.

\end{restatable}

Since after ordering a block leader each validator orders the leader's causal history by some pre-define deterministic rule, we get that by the Containment property of Narwhal all honest validators agree on the total order of the DAG's blocks.

\subsection{Liveness and latency}

\com{

In every wave they randomly elect a block leader from the first round of the wave and check if the commit rule is guaranteed in the forth round.
In which case, they carefully commit the leader's causal history.
To guarantee the  validity (fairness) of the atomic broadcast, i.e., ensuring all blocks are eventually added to the DAG, DAG-Rider uses weak links.
Each weak link, included in a block for a round $r$, refers to a hash representing a block for a round older than $r-1$ that otherwise is no included in the DAG.
For space limitation, for detailed description of DAG-Rider we refer to the paper~\cite{DBLP:journals/corr/abs-2102-08325}.

For the core algorithmic part we extend DAG-Rider: 1)~we replace the classic reliable broadcast with our version described in \Cref{sec:qb-rbc}; 2)~remove weak links to allow garbage collection; and 3)~change the commit rule to have a better common-case latency. All these changes result in \consname.

\subsubsection*{Background:}
DAG-Rider~\cite{DBLP:journals/corr/abs-2102-08325} is a state-of-the-art DAG-based atomic broadcast protocol with expected constant latency and asymptotically optimal amortized  communication complexity.
They use a black-box reliable broadcast abstraction to construct a DAG structure similar to the high-level of Narwhal, which validators locally interpret to totally order all DAG blocks.

To interpret the DAG, validators divide it into \emph{waves}, each of which consists of 4 rounds.
In every wave they randomly elect a block leader from the first round of the wave and check if the commit rule is guaranteed in the forth round.
In which case, they carefully commit the leader's causal history.
To guarantee the  validity (fairness) of the atomic broadcast, i.e., ensuring all blocks are eventually added to the DAG, DAG-Rider uses weak links.
Each weak link, included in a block for a round $r$, refers to a hash representing a block for a round older than $r-1$ that otherwise is no included in the DAG.
For space limitation, for detailed description of DAG-Rider we refer to the paper~\cite{DBLP:journals/corr/abs-2102-08325}. 

\subsubsection*{Weak links and GC:} 
%

Weak links make garbage collection impossible, as every block ever received has to be stored in-case it is pointed by a weak link. Thus, the strong notion of fairness of DAG-Rider is unimplementable within finite storage. In \consname we forbid weak links. We instead re-inject transactions of uncommitted garbage collected blocks into new blocks in subsequent rounds. Thus, Instead of block-level fairness this achieves the more relevant metric of transaction-level fairness. 



In a nutshell, \sysname can be interpreted as a round-based DAG of
blocks of transactions.
Due to the reliable broadcast of each block, all honest
validators eventually interpret the same DAG (even in the presence of
Byzantine validators).
We use the idea from VABA~\cite{DBLP:conf/podc/AbrahamMS19} to let the asynchronous adversary
commit to a communication schedule and then leverage a perfect shared
coin to get a constant probability to agree on a useful work
produced therein.  

In \consname, validators interpret every three rounds of the \sysname DAG as a
\emph{consensus instance}.
The links in the blocks of the first two rounds are interpreted as
all-to-all message exchanges and the third round\footnote{To improve latency we combine the third round with
the first round of the next instance.} produces a shared perfect coin
that is used to elect a unique block from the first round to be the
leader of the instance. 
The goal is to interpret the DAG in a way that, with a constant
probability, safely commit the leader of each instance.
Once a leader is committed, its entire causal history in the DAG can
be safely totally ordered by any deterministic rule.

\subsubsection*{Commit rule:} The commit rule for \consname is simple and heavily inspired by DAG-Rider; to get Tusk we simply need to change Algorithm 3 of DAG-Rider to the code in \Cref{alg:SMROnDag} and reduce the wave length to $2$ round from $4$.
\begin{algorithm}[h!]
	\caption{\textbf{ Commit Rule for \consname}}
	\label{alg:SMROnDag}
	\small
	\begin{algorithmic}[1]

		\Upon{$\textit{wave\_ready}(w)$} 
		\label{alg:SMR:waveCompletion} 
		\State $v \gets \textit{get\_wave\_vertex\_leader}(w)$ \label{alg:SMR:getWaveLeader}
		\If{$v = \bot \vee \left|  \left\{ v' \in DAG_i[\textit{round}(w,2)] \colon
		\textit{strong\_path}(v',v) \right\} \right| < f+1$ } 
	
		\label{alg:SMR:commitrule}
		  \State \Return
		\EndIf

        \State $\textit{leadersStack}.\text{push}(v)$ \label{alg:SMR:stackPush1}
        \For{wave $w'$ from $w - 1$ down to
        $\textit{decidedWave} + 1$} \label{alg:SMR:orderLadersForLoop}
        \State $v' \gets \textit{get\_wave\_vertex\_leader}(w')$
        \If{$v' \neq \bot \wedge \textit{strong\_path}(v,v')$}
            \label{alg:DAGabstraction:checkpath}
            \State $\textit{leadersStack}.\text{push}(v')$ \label{alg:SMR:stackPush2}
            \State $v \gets v'$
            \label{alg:SMR:orderLadersForLoopEnd}
            
        \EndIf
        \EndFor
        \State $\textit{decidedWave} \gets w$ \label{alg:SMR:decidedWaveUpdate}
		  \State $\textit{order\_vertices}(\textit{leadersStack})$
		
		\EndUpon

	\end{algorithmic}
\end{algorithm}
A validator commits a block leader $b$ of a wave $i$ if its local
view of the DAG includes at least $f+1$ blocks in the second round of wave $i$ with links to $b$.
In order to defend against adaptive adversaries we only flip the coin
for wave $i$ at the first round of wave $i+1$, as a result the latency is $3$ rounds (instead of DAG-Rider's $4$).
\sasha{Wait, our amortized latency is better than 3 rounds, right?}
\lefteris{How? I think it is slightly worse in practice}
\Cref{fig:coin} illustrates two 
rounds of \consname (five rounds of \sysname): in round 3 an insufficient 
number of blocks (ie.\ one instead of two or more) support the leader at round 1 (L1); in round 5 sufficient 
blocks (ie.\ two) support the leader block at round 3 (L2) and as a result both 
L1 and L2 are sequenced. A deterministic algorithm on the sub-DAG
causally dependent on L1 and then L2 is then applied to sequence all blocks 
and transactions.

Validators may have different local DAGs at the time
they interpret wave $i$, and some may commit $b$
while others do not.
However, any quorum of $2f+1$ blocks intersect with any quorum of
$f+1$ blocks, and we prove in Appendix~\ref{sec:proofs} that:
\begin{restatable}{lem}{quorum}
\label{lem:quorum}
If an honest validator commits a leader block $b$ in an wave $i$,
then any leader block $b'$ committed by any honest validator $v$ in
future waves have a path to $b$ in $v$'s local DAG.
\end{restatable}
To guarantee that honest validators commit the same block leaders in the
same order, we adopt DAG-Rider's approach: when the
commit rule is satisfied for the leader of some wave, we recursively go back to the last
wave in which we committed a leader, and for every wave $i$ in
between we check whether there is a path between the leader of the
current wave and the leader of $i$. In case there is such a path we
commit the leader of wave $i$ before the leader of the current
wave.
We use Lemma~\ref{lem:quorum} to prove the following:

\begin{restatable}{lem}{safety}
\label{lem:safty}

Any two honest validators commit the same sequence of block leaders.

\end{restatable}

Once a validator commits a block leader it deterministically commits
all its causal history. 
By the Containment property of the DAG and the fact the eventually all
validators have the same DAG (due to reliable broadcast), it is
guaranteed that all honest validators agree on the total order.

The Safety proof of Tusk follows from the Safety proofs of DAG-Rider. 

}

Inspired from the VABA~\cite{DBLP:conf/podc/AbrahamMS19} protocol we let an adversary
commit to a communication schedule and then leverage a perfect shared
coin to get a constant probability, despite asynchrony, to agree on a useful work
produced therein.  

As for termination, we use a combinatorial argument to prove in
Appendix~\ref{sec:proofs}:

\begin{restatable}{lem}{livenessaux}
\label{lem:livenessaux}

For every wave $w$ there are at least $f+1$ blocks in the first
round of $w$ that satisfy the commit rule.

\end{restatable}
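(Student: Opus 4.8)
The plan is to prove Lemma~\ref{lem:livenessaux} by a double-counting argument on the bipartite ``reference'' graph between the second-round and first-round blocks of wave $w$, using two facts already established: every valid block carries certificates for at least $2f+1$ blocks of the preceding round (validity condition in Section~\ref{sec:dag}), and Narwhal's pacemaker keeps every honest validator advancing, so every round eventually contains a block from each of the $n-f$ honest validators.

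First I would fix the wave $w$ and consider the DAG once it is complete enough: let $V_1$ be the set of blocks of the first round of $w$, and let $V_2$ be the set of blocks of the second round of $w$ created by honest validators. By Narwhal liveness, eventually $|V_2| \ge n-f$, and $|V_1| \le n$ since there is at most one certified block per validator per round. Put an edge from $u \in V_2$ to $v \in V_1$ whenever $u$ has a strong path to $v$; since the two rounds are adjacent this is exactly a direct reference, so by the validity condition each $u \in V_2$ has at least $2f+1$ out-edges, and the number of edges is at least $(2f+1)\,|V_2|$.

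Next I would bound the edge count from above under the contradiction hypothesis that at most $f$ blocks of $V_1$ are referenced by $f+1$ or more blocks of $V_2$ (call such a block \emph{heavy}). Each heavy block contributes at most $|V_2|$ edges and there are at most $f$ of them; every other block of $V_1$ contributes at most $f$ edges, and there are at most $|V_1| \le n$ of them; hence the edge count is at most $f\,|V_2| + f(n-f)$. Combining the two bounds gives $|V_2|(f+1) \le f(n-f)$, and plugging in $|V_2| \ge n-f$ yields $(n-f)(f+1) \le f(n-f)$, i.e.\ $f+1 \le f$ — a contradiction. Therefore at least $f+1$ blocks of the first round of $w$ are each referenced by at least $f+1$ honest second-round blocks, and so satisfy the commit rule. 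I would close by noting the consequence used for termination: because the $f+1$ witnessing blocks are honest, reliable broadcast guarantees every honest validator eventually delivers them, so every honest validator eventually sees the commit rule satisfied for each of these $f+1$ candidates; hence whenever the unbiased coin of wave $w$ elects one of them — probability at least $(f+1)/n > 1/3$ — every honest validator commits it.

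I expect the main obstacle to be bookkeeping about \emph{which} local DAG the commit rule is evaluated against: the counting argument is a static statement about the eventual/common DAG, whereas a validator decides from its own partial view and only after the coin for $w$ is revealed (one round later). The care is in lining up the ``eventually $|V_2| \ge n-f$'' quantifier and the reliable-broadcast delivery guarantee with the moment of decision, rather than in the inequality manipulation, which is routine.
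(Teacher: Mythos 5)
Your core argument is the same double-counting on the bipartite reference graph between the second and first rounds of the wave that the paper uses; the paper runs it as a direct pigeonhole computation (each of $2f+1$ second-round blocks has $2f+1$ out-edges, giving $(2f+1)^2$ edges; after granting every one of the at most $3f+1$ first-round blocks $f$ edges, the $f^2+3f+1$ leftover edges force at least $f+1$ first-round blocks up to the $f+1$ threshold), whereas you run it by contradiction. The arithmetic in both versions is equivalent, and your contradiction $(f+1)\lvert V_2\rvert \le f(n-f)$ does go through once $\lvert V_2\rvert \ge 2f+1$.

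The one genuine flaw is in how you justify that premise. You take $V_2$ to be the \emph{honest} second-round blocks and assert that ``Narwhal's pacemaker keeps every honest validator advancing, so every round eventually contains a block from each of the $n-f$ honest validators.'' Narwhal does not guarantee this: validity condition (2) requires a block to be at the checking validator's \emph{current} local round, and the paper explicitly notes that blocks with an older logical time may be dismissed, so a slow honest validator's block for a given round may never be certified and never enter the DAG for that round (its transactions are re-injected later instead). What \emph{is} guaranteed --- and is all the counting needs --- is that the second round of any completed wave contains at least $2f+1$ certified blocks, honest or not, because no validator can advance past that round without accumulating $2f+1$ certificates for it. This is exactly the set $S$ the paper starts from. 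Your restriction to honest blocks is also unnecessary for the delivery argument in your closing paragraph: every block in the DAG carries a certificate of availability signed by $f+1$ honest validators, so its eventual retrievability does not depend on its creator being honest. With the premise repaired in this way (any $2f+1$ second-round blocks, out-degree at least $2f+1$ each, at most $3f+1$ first-round targets), your proof is correct and coincides with the paper's.
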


To guarantee liveness against an adaptive asynchronous adversary, 
we use the randomness produced in the third round of a wave to determine the wave's block leader.
Therefore, the adversary learns who is the block leader of a
wave only after the first two rounds of the wave are fixed. 
Thus the $f+1$ blocks that satisfy the commit rule (from the above lemma) are determined before the adversary learns the block leader.
Therefore, the probability to commit a block leader in each wave is at least $\frac{f+1}{3f+1} > 1/3$ even if the adversary fully controls the network.

In Appendix~\ref{sec:proofs} we prove the following lemma:  

\begin{restatable}{lem}{liveness}
\label{lem:liveness}

In expectation, \consname commits a block leader
every 7 rounds in the DAG under an asynchronous adversary.

\end{restatable}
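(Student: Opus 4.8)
The plan is to combine Lemma~\ref{lem:livenessaux} with the adaptive-adversary argument sketched just before the statement, and then turn the resulting per-wave success probability into an expected round count. First I would fix the key quantitative fact: by Lemma~\ref{lem:livenessaux}, in every wave $w$ there are at least $f+1$ blocks in the first round of $w$ that satisfy the commit rule, and because the block leader of $w$ is chosen by the perfect shared coin produced only in the third round of $w$ — after the adversary has already committed to the first two rounds of the wave — the identity of the $f+1$ ``good'' first-round blocks is determined independently of the coin. Since the coin is a perfect random coin over the $n = 3f+1$ potential leader slots, the probability that the elected leader is one of these $f+1$ good blocks is at least $\frac{f+1}{3f+1} > \frac13$, uniformly over all adversarial schedules. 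I would state this as the core claim: each wave independently commits its leader with probability $p \ge 1/3$.

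Next I would account for the round cost. A wave in \consname spans $3$ rounds, but because the third (randomness-producing) round of each wave is piggybacked onto the first round of the next wave, consecutive waves overlap by one round, so $k$ waves occupy $2k+1$ rounds; amortized, each additional wave costs $2$ rounds, and there is also the constant offset of $3$ rounds until the first wave's leader can be committed. Let $W$ be the number of waves until the first successful commit; since the coin for each wave is revealed fresh and the ``good set'' is fixed before it, the per-wave successes stochastically dominate independent Bernoulli$(1/3)$ trials, so $\mathbb{E}[W] \le 3$. Converting waves to rounds: the expected round at which a leader is first committed is at most $3 + 2\,\mathbb{E}[W] \le 3 + 6$; more carefully, once the pipeline is running steadily a new leader is committed every $\mathbb{E}[W]$ waves, i.e. every $2\,\mathbb{E}[W] \le 6$ rounds, but the commit rule is only checked starting from round $3$ of a wave and the worst-case bookkeeping (the ``$+1$'' overlap round and the initial offset) brings the amortized figure to the claimed $7$ rounds per committed block leader. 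I would present the bound as: $\mathbb{E}[\text{rounds per committed leader}] \le 2\cdot\frac{1}{p} + O(1) \le 6 + 1 = 7$ under the worst-case asynchronous adversary.

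The main obstacle I expect is the bookkeeping that turns ``probability $\ge 1/3$ per wave'' into exactly $7$ rounds rather than a looser $O(1)$: one must be careful that the per-wave trials are genuinely independent (this is where adaptivity of the adversary matters — the argument relies on the coin being unpredictable and the good set being frozen before the coin, both guaranteed by the threshold-signature coin construction and Lemma~\ref{lem:livenessaux}), and one must correctly charge the overlapping/piggybacked round so that the geometric-series sum $\sum_{k\ge 1} 2k\,(1-p)^{k-1}p$ together with the fixed offset lands at $7$. A secondary subtlety is that Lemma~\ref{lem:livenessaux} guarantees $f+1$ committable first-round blocks \emph{in every validator's eventual view}, and one should note that whichever of those $f+1$ blocks is elected will eventually be committed by every honest validator (using Lemma~\ref{lem:quorum} and the reliable-broadcast convergence of the DAG), so the expected-round count is well-defined globally and not just locally; I would remark on this to close the gap between ``the elected leader is good'' and ``a leader is actually committed.''
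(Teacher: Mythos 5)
Your proposal is correct and follows essentially the same route as the paper's proof: invoke Lemma~\ref{lem:livenessaux} to get $f+1$ committable first-round blocks, use the fact that the coin is revealed only after the first two rounds of the wave are fixed to obtain a per-wave success probability of at least $\frac{f+1}{3f+1} > 1/3$, conclude an expected $3$ waves per commit, and convert to rounds via the one-round overlap between consecutive waves ($2k+1$ rounds for $k$ waves, giving $7$). Your additional remarks on independence of the per-wave trials and on the local-versus-global view of the commit are refinements the paper leaves implicit, but the argument is the same.
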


As for latency in realistic networks in which message delays are distributed uniformly at random, we prove in Appendix~\ref{sec:proofs} the following:

\begin{restatable}{lem}{livenessran}
\label{lem:livenessran}

In networks with random message delays, in expectation, \consname commits each block in the DAG in 4.5 rounds.

\end{restatable}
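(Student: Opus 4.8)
\textbf{Proof proposal for Lemma~\ref{lem:livenessran}.}

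The plan is to combine the structural commit-rule analysis of \consname with a probabilistic argument about which validator's block becomes the wave leader, specializing to the setting where message delays are i.i.d.\ random rather than adversarially scheduled. First I would fix a wave $w$ and recall, via Lemma~\ref{lem:livenessaux}, that there are always at least $f+1$ blocks in the first round of $w$ that satisfy the commit rule --- i.e.\ blocks $b$ such that at least $f+1$ second-round blocks of $w$ have a strong path to $b$. The key observation for the common case is that when message delays are uniformly random (and not controlled by the adversary), the block that ends up being elected leader --- chosen by the perfect shared coin among the $3f+1$ first-round slots --- is, from the point of view of the random delays, a uniformly random choice that is \emph{independent} of which particular blocks received enough votes, except that in the random-delay regime \emph{every} first-round block is expected to be seen and referenced by a $2f+1$ quorum of second-round blocks within one round. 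So in the common case the ``$f+1$ out of $3f+1$'' worst-case bound is replaced by ``essentially all'' first-round blocks satisfying the commit rule, and hence the elected leader satisfies it with probability close to $1$ per wave rather than with probability $1/3$.

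Next I would set up the latency accounting. A wave occupies $3$ rounds, but because the third round of a wave is piggybacked with the first round of the following wave (as described in \Cref{sec:asynch}), consecutive waves are offset by only $2$ rounds. So if a leader is committed at wave $w$ with probability $p$ (independently across waves, given the coin's perfect independence), the expected number of waves until a commit is $1/p$, and the expected number of rounds between successive committed leaders is $2/p + (\text{const})$, where the constant accounts for the rounds inside the final successful wave needed to reveal the coin and observe the $f+1$ votes. Then, by the recursive ordering mechanism and Lemma~\ref{lem:quorum}/\ref{lem:safty}, once a leader at wave $w$ is committed, \emph{all} intermediate waves' leaders back to the previous committed one get ordered as well, so the amortized number of rounds that ``produces'' one ordered leader block is what we want, and with $p \to 1$ this gives roughly $2$ rounds per leader; adding the fixed pipeline delay for a block to travel from being proposed (round $1$ of its wave) to being ordered (the $+2.5$ term coming from: round $1$ propose, round $2$ vote, reveal coin, commit, i.e.\ the half-round granularity of the piggybacked schedule) yields the claimed $4.5$ rounds in expectation. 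I would make the constants precise by tracking, for a block proposed in round $r$, the wave(s) in which it can first be elected and the expected wait until one of those waves commits.

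The main obstacle I anticipate is making rigorous the claim that under random delays ``almost all'' first-round blocks satisfy the commit rule, and that this event is suitably independent of the coin. Concretely, one must argue that with high probability each first-round block of wave $w$ is delivered to at least $2f+1$ validators before they produce their second-round blocks, so that it is referenced by a $\geq f+1$ strong-path quorum; this is where the assumption ``message delays distributed uniformly at random'' does real work, and I would formalize it by a balls-in-bins / order-statistics argument on the arrival times of the $n$ first-round blocks at each validator, showing the fraction of ``slow'' blocks is $o(1)$ (or at least bounded below $2f/3f$). The delicate point is that the adversary still controls \emph{which} $f$ validators are Byzantine and can still withhold their blocks, so the argument must only rely on the $2f+1$ honest blocks and show that a random honest block is, with the desired probability, among those seen by enough honest validators in time --- and that the coin, being revealed only in round $3$ of the wave after the honest second-round blocks are fixed, is independent of this set. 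Once that independence and the high-probability coverage are established, the expectation computation is routine geometric-series bookkeeping with the $2$-rounds-per-wave pipelining, and the $4.5$ falls out; I would also sanity-check it against the degenerate case matching Lemma~\ref{lem:liveness} (the $7$-round adversarial bound) to confirm the constants are consistent.
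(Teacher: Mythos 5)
Your high-level structure (a per-wave success probability followed by round accounting over pipelined three-round waves) matches the paper's, but your key probabilistic step is wrong, and the paper does something different there. You argue that under random delays ``essentially all'' first-round blocks satisfy the commit rule, so the elected leader is committable with probability close to $1$. That does not hold: the bottleneck is not whether a first-round block is \emph{delivered} in time (your balls-in-bins argument addresses only delivery) but that each second-round block \emph{references} only a quorum of $2f+1$ of the $3f+1$ first-round certificates, since validators advance the round as soon as they have collected $2f+1$ of them. The paper's proof models each second-round block as including an independent, uniformly random $(2f+1)$-subset of the first-round blocks; hence the elected leader $b_i$ is referenced by any given second-round block with probability only $\frac{2f+1}{3f+1}\geq 2/3$, and the probability that at least $f+1$ out of $2f+1$ second-round blocks reference $b_i$ is bounded below by the $f=1$ case, $\binom{3}{3}(2/3)^3+\binom{3}{2}(2/3)^2(1/3)=20/27\approx 0.74$ --- a constant strictly below $1$, not $1-o(1)$. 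The claimed $4.5$ then comes from roughly $1/0.74$ expected waves per commit times the (pipelined) wave length, plus the extra half round for blocks placed in later rounds of a wave; it does not come from $p\to 1$ plus a $2.5$-round pipeline term. That your arithmetic also lands on $4.5$ is a coincidence of how you chose the additive constant.

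A second, smaller issue: you spend effort on the independence of the coin from the set of committable leaders, but for this common-case lemma that is not where the randomness is used. The paper conditions on the leader's identity and exploits the randomness of the \emph{references} (i.e., of the message delays) across the $2f+1$ second-round blocks, assumed independent across those blocks. Lemma~\ref{lem:livenessaux} and the ``$f+1$ out of $3f+1$'' counting are only needed for the adversarial-scheduling bound of Lemma~\ref{lem:liveness}; importing them here starts you from the worst-case $1/3$ and then overshoots when you try to upgrade it to $1$ instead of to the correct constant $0.74$.
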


 \begin{figure}[t]
     \centering
     \includegraphics[width=\columnwidth]{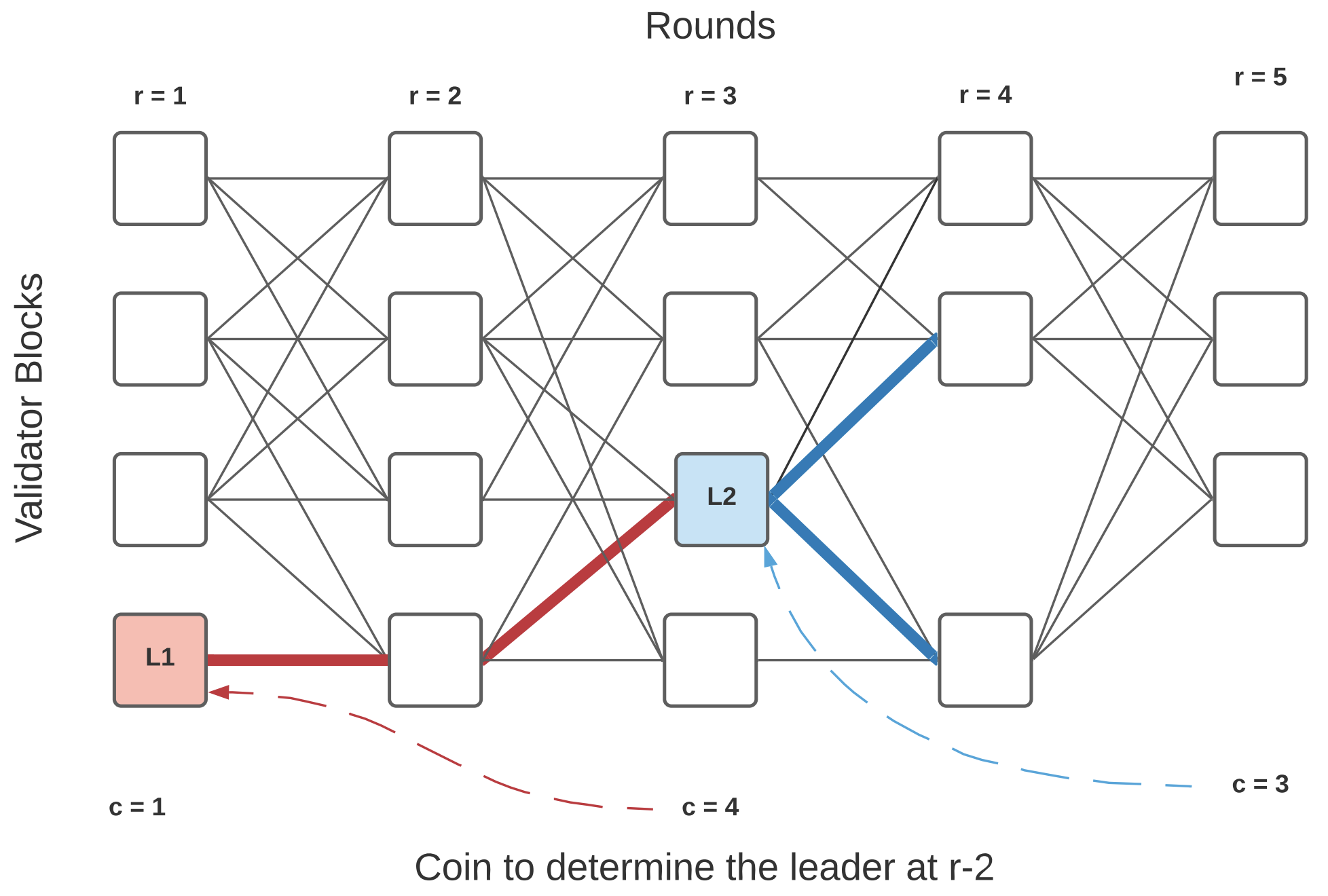}
     \caption{Example of commit rule in \consname. Every odd round has a coin value that selects a leader of round $r-2$. If the leader has less than $f+1$ support (red) they are ignored, otherwise (blue) the algorithm searches the causal DAG to commit all preceding leaders (including red) and totally orders the rest of the DAG afterward.
     }
     \label{fig:coin}
\end{figure}

%

\section{Implementation} \label{sec:implementation}
We implement a networked multi-core \sysname validator in Rust, using Tokio\footnote{\url{https://tokio.rs}} for asynchronous networking, ed25519-dalek\footnote{\url{https://github.com/dalek-cryptography/ed25519-dalek}} for elliptic curve based  signatures. Data-structures are persisted using RocksDB\footnote{\url{https://rocksdb.org}}. 
We use TCP to achieve reliable point-to-point channels, necessary to correctly implement the distributed system abstractions. We keep a list of messages to be sent between peers in memory and attempt to send them through persistent TCP channels to other peers. In case TCP channels are drooped we attempt to re-establish them and attempt again to send stored messages. Eventually, the primary or worker logic establishes that a message is no more needed to make progress, and it is removed from memory and not re-sent -- this ensures that the number of messages to unavailable peers does not become unbounded and a vector for Denial-of-Service. 
The implementation is around 4,000 LOC and a further 2,000 LOC of unit tests. 
We are open sourcing the Rust implementation of \sysname
and \consname\footnote{
\ifdefined\cameraReady
\url{https://github.com/facebookresearch/narwhal/tree/tusk}
\else
Link omitted for blind review.
\fi
}, HS-over-\sysname\footnote{
\ifdefined\cameraReady
\url{https://github.com/facebookresearch/narwhal/tree/narwhal-hs}
\else
Link omitted for blind review.
\fi
} as well as all Amazon Web Services orchestration scripts, benchmarking scripts, and measurements data to enable reproducible results\footnote{
\ifdefined\cameraReady
\url{https://github.com/facebookresearch/narwhal/tree/tusk/results/data}
\else
Link omitted for blind review.
\fi
}.

We evaluate both \consname (Section~\ref{sec:asynch}) and HS-over-\sysname  (Section~\ref{sec:hotstuff}).
Additionally, to have a fair comparison we implement Hotstuff, but unlike the original paper we \first add persistent storage in the nodes (since we are building a fault-tolerant system), \second evaluate it in a WAN, and \third implement the pacemaker module that is abstracted away following the LibraBFT specification~\cite{baudet2019state}.
We specify two versions of HotStuff (HS). First `baseline-HS' implements the standard way blockchains (Bitcoin or Libra) disseminate single transactions on the gossip/broadcast network. Second Batched-HS implements the state-of-the-art technique~\cite{graphene,prism,CastroL02} of validators batching transactions and sending them out of the critical path. 
Specifically, Batched-HS separates the task of data dissemination and consensus in the same way as Prism~\cite{prism}. It first disseminates batches of transactions (called `transaction blocks' in Prism), then the leader proposes hashes of batches to amortize the cost of the initial broadcast. These transaction batches perform a similar function to \sysname's headers.
The goal of this version is to show that this solution already gives benefits in a stable network but is not robust enough for a real deployment.
We also open source our implementation of Batched-HS\footnote{
\ifdefined\cameraReady
\url{https://github.com/asonnino/hotstuff/tree/d771d4868db301bcb5e3deaa915b5017220463f6}
\else
Link omitted for blind review.
\fi
}.
\section{Evaluation} \label{sec:evaluation}
We evaluate the throughput and latency of our implementation of \sysname through experiments on AWS. 
We particularly aim to demonstrate that \first \sysname as a \mempool has advantages over the existing simple \mempool as well as straightforward extensions of it and \second that the scale-out is effective, in that it increases throughput linearly as expected. Additionally, we want to show that 
\third \consname is a highly performing consensus protocol that leverages \sysname to maintain high throughput when increasing the number of validators (proving our claim that message complexity is not that important), as well as that \sysname, provides \fourth robustness when some parts of the system inevitably crash-fail or suffer attacks. For an accompanying theoretical analysis of these claims see \Cref{table:comp}. Note that evaluating BFT protocols in the presence of Byzantine faults is still an open research question~\cite{twins}.

We deploy a testbed on Amazon Web Services, using \texttt{m5.8xlarge} instances across 5 different AWS regions: N. Virginia (us-east-1), N. California (us-west-1), Sydney (ap-southeast-2), Stockholm (eu-north-1), and Tokyo (ap-northeast-1). They provide 10Gbps of bandwidth, 32 virtual CPUs (16 physical core) on a 2.5GHz, Intel Xeon Platinum 8175, and 128GB memory and run Linux Ubuntu server 20.04. 

In the following sections, each measurement in the graphs is the average of 2 runs, and the error bars represent one standard deviation. 
Our baseline experiment parameters are: 4 validators each running with a single worker, a batch size of 500KB, a block size of 1KB, a transaction size of 512B, and one benchmark client per worker submitting transactions at a fixed rate for a duration of 5 minutes. We then vary these baseline parameters through our experiments to illustrate their impact on performance. When referring to \emph{latency}, we mean the time elapsed from when the client submits the transaction to when the transaction is committed by the leader that proposed it as part of a block. We measure it by tracking sample transactions throughout the system, under high load.

\subsection{\sysname as a \mempool}

\begin{figure*}[t]
    \centering
    \includegraphics[width=\textwidth]{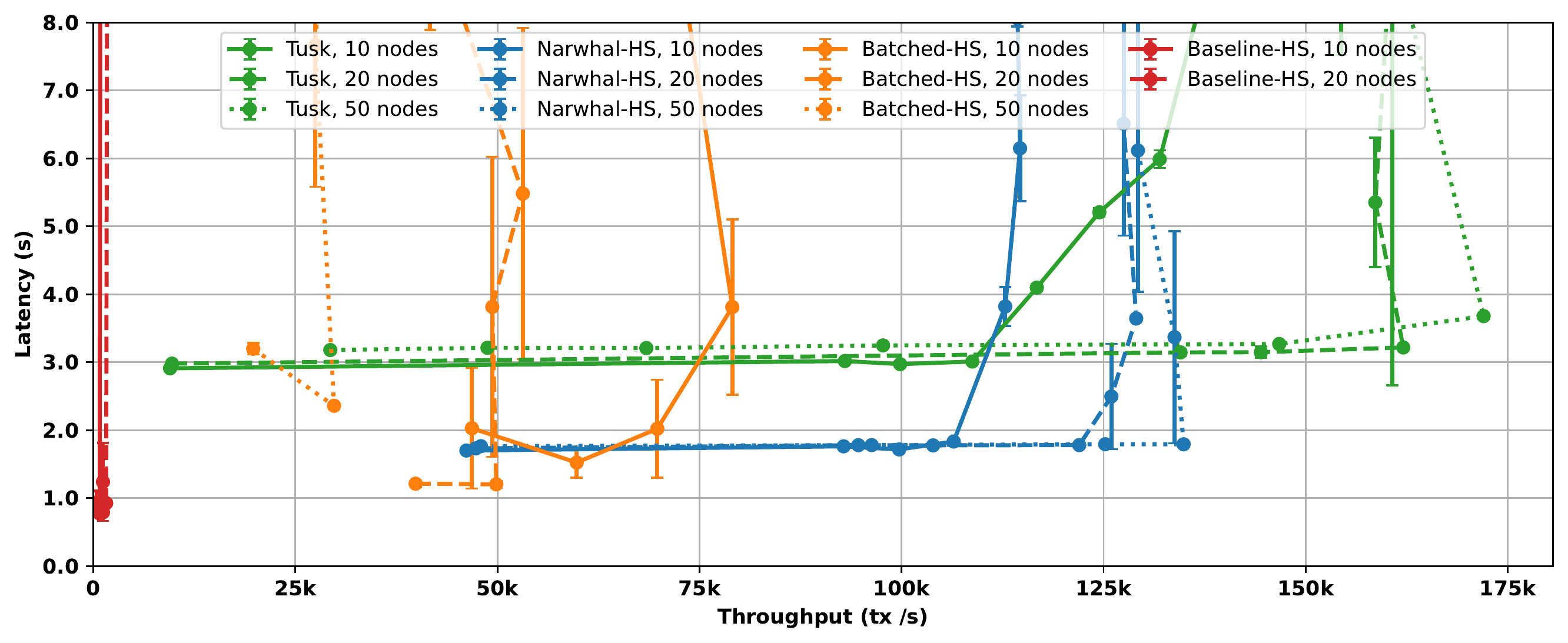}
    \caption{Comparative throughput-latency performance for the novel \sysname-HotStuff, \consname, batched-HotStuff and the baseline HotStuff. WAN measurements with 10, 20, and 50 validators, using 1 worker collocated with the primary. No validator faults, 500KB max.\ block size and 512B transaction size.}
    \label{fig:latency}
\end{figure*}

\Cref{fig:latency} illustrates the throughput of Narwhal for varying numbers of validators, as well as baseline and batched HS:

\para{Baseline HS}
Baseline HS throughput (see \Cref{fig:latency}, Baseline-HS lines, left bottom corner), with a naive mempool as originally proposed, is quite low. With either 10 and 20 validators throughput never exceeds 1,800 tx/s, although latency at such low throughput is very good at around 1 second. Such surprisingly low numbers are comparable to other works~\cite{DBLP:journals/corr/abs-2103-04234}, who find HS performance to be 3,500 tx/s on LAN without modifications such as only transmitting hashes~\cite{DBLP:journals/corr/abs-1906-05552}. Performance evaluations~\cite{DBLP:journals/corr/abs-1912-05241} of LibraBFT~\cite{baudet2019state} that uses Baseline HS, report throughput of around 500 tx/s.

\para{Batched HS}
For Batched HS without faults (see \Cref{fig:latency}, Batched HS lines), the maximum throughput we observe is 70,000 tx/s for a committee of 10 nodes, and lower (up to 50,000 tx/s) for a larger committee of 20. Latency before saturation is around 2 seconds. 
The almost 20x performance improvement compared to baseline HS is evidence that decoupling transaction dissemination from the critical path of consensus is the key to blockchain scalability. Performance decrease with the committee size similarly to other works~\cite{DBLP:journals/corr/abs-2103-04234}.

\para{\sysname HS}
The Nar\-whal-HS lines, show the through\-put-latency characteristics of using the \sysname mempool with HS, for different committee sizes, and 1 worker collocated with each validator. We observe that it achieves 140,000 tx/sec at a latency consistently below 2 seconds similar to the Batched HS latency. This validates the benefit of separating mempool from consensus, with mempool largely affecting the throughput and consensus affecting latency. The counter-intuitive fact that throughput increases with the committee size is due to the worker's implementation not using all resources (network, disk, CPU) optimally. Therefore, more validators and workers lead to increased multiplexing of resource use and higher performance for \sysname.

\para{\consname}
Finally, the \consname lines illustrate the latency-through\-put for various committee sizes, and 1 worker collocated with the primary per validator. We observe a stable latency at around 3 secs for all committee sizes as well as a peak throughput at 170,000 tx/sec for 50 validators. This is by far the most performant fully asynchronous consensus protocol (see \Cref{sec:related}) and only slightly worse than \sysname HS. The difference in reported latency compared to theory is that \first we not only put transactions in the blocks of the first round of a consensus instance but in all rounds, which increases the expected latency by $0.5$ rounds and \second synchronization is conservative, to not flood validators, hence it takes slightly longer for a validator to collect the full causal graph and start the total ordering.

\subsection{Scale-Out using many workers}
To evaluate the scalability of \sysname, we measure throughput with each worker and primary on a dedicated instance. For example, a validator running with 4 workers is implemented on 5 machines (4 workers + the primary). The workers are in the same data center as their primary, and validators are distributed over the 5 data centers over a WAN.

\begin{figure}[t]
\centering
\includegraphics[width=\columnwidth]{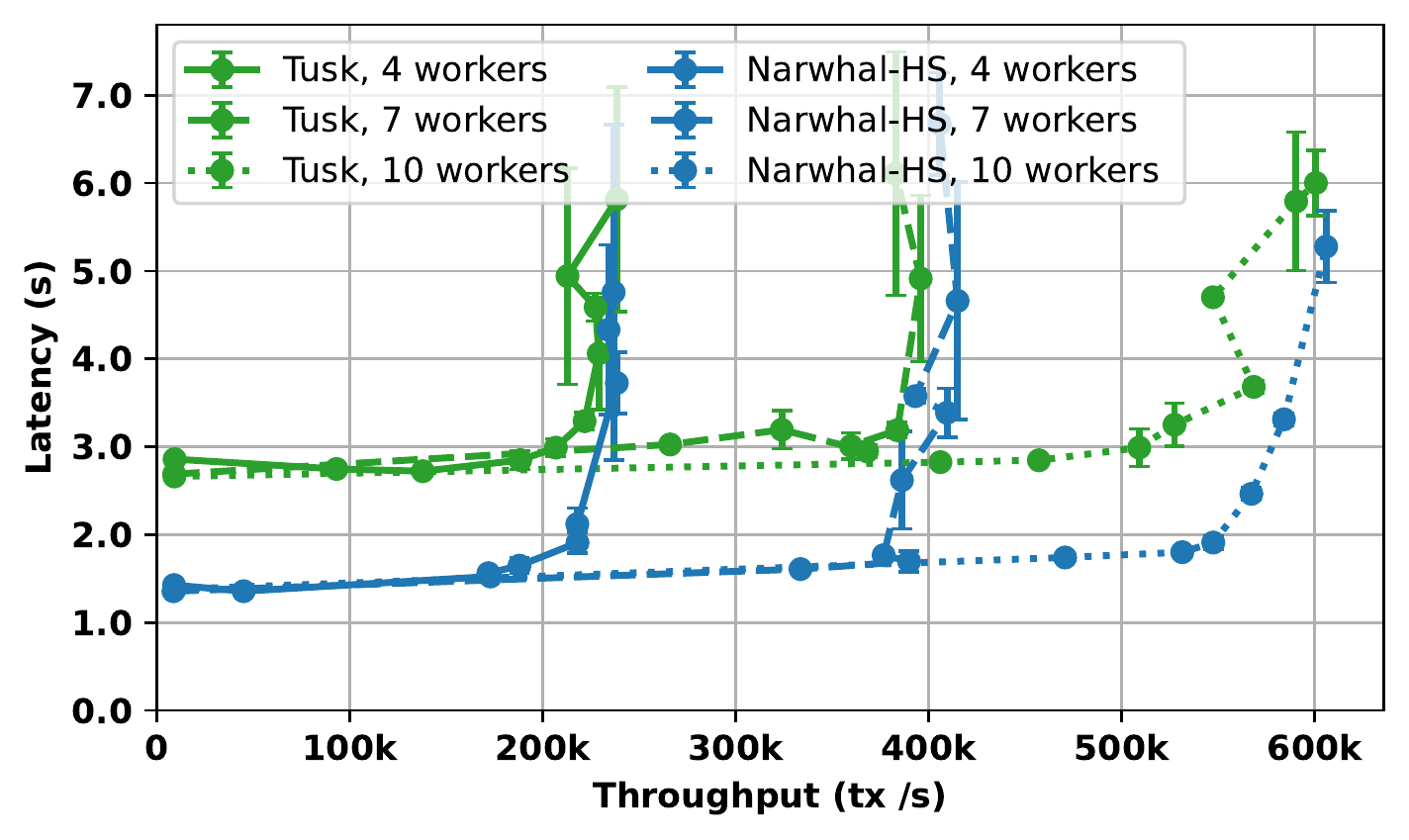}\\    

\includegraphics[width=\columnwidth]{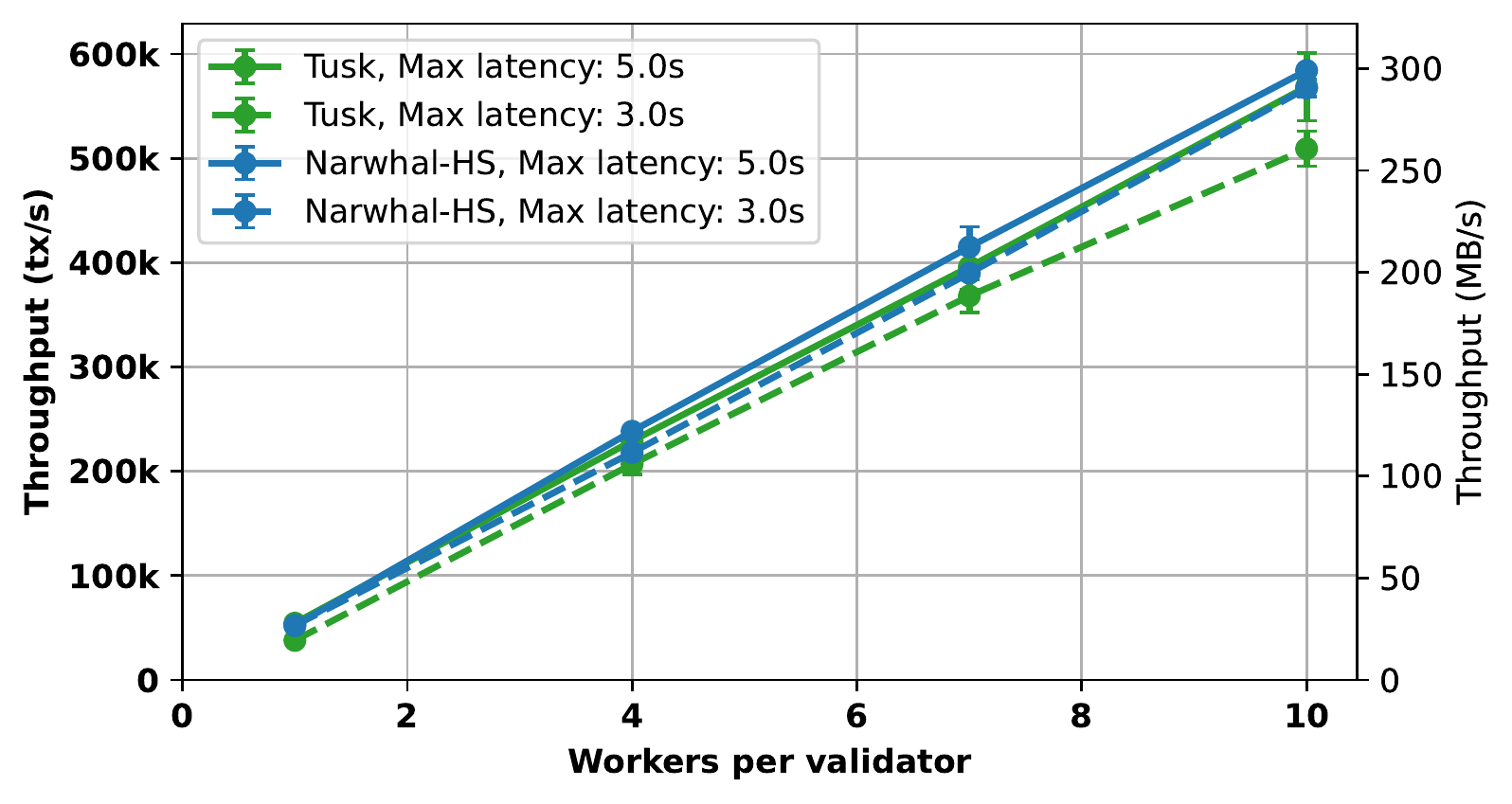}
   
    \caption{\consname and HS with \sysname latency-throughput graph for 4 validators and different number of workers. The transaction and batch sizes are respectively set to 512B and 1,000 transactions.}
    \label{fig:scalability-latency}
\end{figure}

The top \Cref{fig:scalability-latency} illustrates the latency-throughput graph of \sysname HS and \consname for a various number of workers per authority whereas the bottom \Cref{fig:scalability-latency} shows the maximum achievable throughput under various service level objectives (SLO). As expected, the deployments with a large number of workers saturate later while they all maintain the same latency, proving our claim that the primary is far from being saturated even with 10 workers concurrently serving hashes of batches. 
Additionally, on the SLO graph, we can see the linear scaling as the throughput is close to:
$$(\text{\#workers})*(\text{throughput~for~one~worker})$$




\subsection{Performance under Faults}

\Cref{fig:latency_faults} depicts the performance of all systems when a committee of 10 validators suffers 1 or 3 crash-faults (the maximum that can be tolerated). Both baseline and batched HotStuff suffer a massive degradation in throughput as well as a dramatic increase in latency. For three faults, baseline HotStuff throughput drops 5x (from a very low throughput of 800 tx/s to start with) and latency increases 40x compared to no faults; batched Hotstuff throughput drops 30x (from 70k tx/sec to 2.5k tx/sec) and latency increases 10x.

\begin{figure}[t]
    \centering
    \includegraphics[width=\columnwidth]{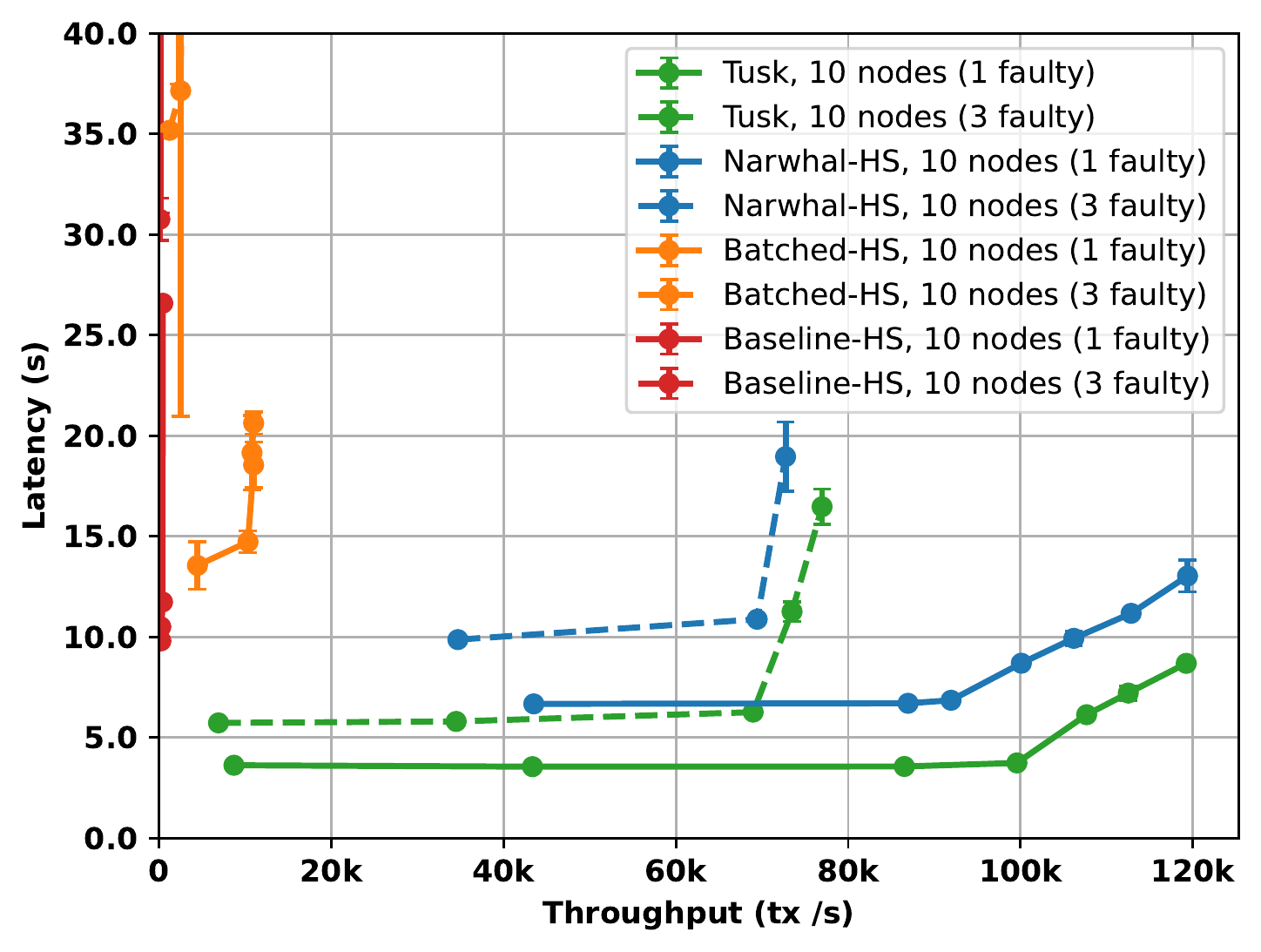}
    \caption{Comparative throughput-latency under faults. WAN measurements with 10 validators, using 1 worker collocated with the primary. One and three faults, 500KB max.\ block size and 512B transaction size.}
    \label{fig:latency_faults}
\end{figure}

In contrast, \consname and \sysname-HotStuff maintain a good level of throughput: the underlying \mempool design continues collecting and disseminating transactions despite the faults, and is not overly affected by the faulty validators. The reduction in throughput is in great part due to losing the capacity of faulty validators. As predicted by theory, \consname's latency is the least affected by faulty nodes, committing in less than 4 sec under 1 fault, and less than 6 sec under 3 faults. The increase in latency is due to the elected block being absent more often. \sysname-HotStuff exhibits a higher latency, but surprisingly lower than the baseline or batched variants, at less than 7 sec for 1 fault, and around 10 sec for 3 faults (compared to 35-40 sec for baseline or batched). We conjecture this is due to the very low throughput requirements placed on the protocol when combined with \sysname, as well as the fact that the first successful commit commits a large fraction of recent transactions thanks to the $2/3$-Causality.

\section{Related work \& Limitations} \label{sec:related}


\subsection{Performance} 


Widely-used blockchain systems such as Tendermint, provide 5k tx/sec~\cite{buchman2016tendermint}. However, performance under attack is much contested. Early work suggests that specially crafted attacks can degrade the performance of PBFT consensus systems massively~\cite{amir2010prime}, to such a lower performance point that liveness guarantees are meaningless. Recent work targeting PBFT in Hyperledger Fabric~\cite{DBLP:conf/eurosys/AndroulakiBBCCC18} corroborates these results~\cite{nguyen2019impact} and shows latency grows from a few seconds to a few hundred seconds, just through blocks being delayed. Han et al.~\cite{han2020performance} report similar dramatic performance degradation in cases of simple node crash failure for a number of quorum based systems namely Hyperledger Fabric, Ripple and Corda. In contrast, we demonstrate that \sysname combined with a traditional consensus mechanism, such as HotStuff maintains its throughput under attack, with increased latency. 

Mir-BFT \cite{DBLP:journals/corr/abs-1906-05552}, is the most performant variant of PBFT available. For transaction sizes of about 500B (similar to our benchmarks), the peak performance achieved on a WAN for 20 validators is around 80,000 tx/sec under 2 seconds -- a performance comparable to our baseline HotStuff with a batched mempool. Impressively, this throughput decreases only slowly for large committees up to 100 nodes (at 60,000 tx/sec). Faults lead to throughput dropping to zero for up to 50 seconds, and then operation resuming after a reconfiguration to exclude faulty nodes. \consname's single worker configuration for larger committees offers slightly higher performance (a bit less than 2x), but at double the latency. However, \consname with multiple workers allows 9x better throughput, and is much less sensitive to faults. 

We note that careful engineering of the mempool, and efficient transaction dissemination, seems crucial to achieving high-throughput consensus protocols. Recent work~\cite{DBLP:journals/corr/abs-2103-04234}, benchmarks crash-fault and Byzantine protocols on a LAN, yet observes orders of magnitude lower throughput than this work or Mir-BFT on WAN: 3,500 tx/sec for HotStuff and 500 tx/sec for PBFT. These results are comparable with the poor baseline we achieved when operating HotStuff with a naive best-effort broadcast mempool (see \Cref{fig:latency}). 

In the blockchain world, scale-out has come to mean sharding, but this only focuses in the case that every machine distrusts every other machine. However, in classic data-center setting there is a simpler scale-out solution before sharding, that of specializing machines, which we also use. In this paper we do not deal with sharding since our consensus algorithm can obviously interface with any sharded blockchain~\cite{DBLP:conf/sp/Kokoris-KogiasJ18,DBLP:conf/ndss/Al-BassamSBHD18,DBLP:conf/ccs/ZamaniM018,avarikioti2021divide}.

\subsection{Asynchronous Consensus \& \consname}



In the last 5 years the search of practical asynchronous consensus has captured the community~\cite{DBLP:conf/ccs/GuoL0XZ20, DBLP:conf/ccs/DuanRZ18, miller2016honey, spiegelman2019ace}, because of the high robustness guarantees it promises. 
The most performant one was Dumbo2 \cite{DBLP:conf/ccs/GuoL0XZ20} , which achieves a throughput of 5,000 tx/sec for an SLO of 5 seconds in a setting of 8 nodes in WAN with transaction size of 250B.
However, despite the significant improvement, the reported numbers do not realize the hope for a system that can support hundreds thousands of tx/sec
We believe that by showing a speedup of 20X over Dumbo2, \consname finally proves that asynchronous Byzantine consensus can be highly efficient and thus a strong candidate for future scalable deployed Blockchains. Additionally, Tusk manages to provide censorship-resistance at no extra cost, unlike the Honeybadger BFT variants that use threshold encryption.

The most closely related work to \consname is DAG-Rider~\cite{DBLP:journals/corr/abs-2102-08325}.
For the core algorithmic part we extend DAG-Rider: (i) we replace the classic reliable broadcast with our version described in \Cref{sec:qb-rbc}; (ii) change the commit rule to have a better common-case latency; and (iii) remove weak links to allow garbage collection.

\com{
which is a theoretical concurrent work on the Byzantine Atomic Broadcast problem. 
They use a DAG structure that resembles ours and have the same asymptotic worse-case analysis. 
Moreover, their theoretical security argument supports our safety claims.
As for liveness, \consname has a lower number of rounds in the common case due to an optimistic commit rule, but DAG-Rider has better worse-case round complexity (if counted in block of the DAG).
In addition,} DAG-Rider uses the notion of weak-links to achieve the eventual fairness property required by atomic broadcast, i.e., that any block broadcast by an honest party will be eventually committed. 
This make garbage collection impossible, as every block ever received has to be stored in-case it is pointed by a weak link. Thus, the strong notion of fairness of DAG-Rider is unimplementable within finite storage and purely theoretical. In \consname we forbid weak links. We instead re-inject transactions of uncommitted garbage collected blocks into new blocks in subsequent rounds. Thus, instead of block-level fairness this achieves the more realistic metric of transaction-level fairness which is sufficient for censorship-resistance of transactions.

All in all, we believe that DAG-Rider further validates the design of \sysname, since it would take less than 200 LOC to implement DAG-Rider over \sysname.

\subsection{DAG-based Communication  \& \sysname}

The directed acyclic graph (DAG) data-structure as a substrate for capturing the communication of secure distributed systems in the context of Blockchains has been proposed multiple times. 
The layered structure of our DAG has also been proposed in the crash fault setting by Ford~\cite{tlc}, it is however, embedded in the consensus protocol and does not  leverage it for batching and pipelining.
%
Hashgraph~\cite{baird2016swirlds} embeds an asynchronous consensus
mechanism onto a DAG of degree two, without a layered Threshold Clock
structure. As a result the logic for when older blocks are no more
needed is complex and unclear, and garbage collecting them difficult --
leading to potentially unbounded state to decide future blocks. 
In addition, they use local coins for randomness, which can potentially
lead to exponential latency. 
Blockmania~\cite{blockmania} embeds a single-decision variant
of PBFT~\cite{CastroL02} into a non layered DAG
leading to a partially synchronous system, with challenges when it
comes to garbage collection -- as any past blocks may be required for
long range decisions in the future. Neither of these protocols use a clear
decomposition between lower level sharded availability, and a higher
level consensus protocol as \sysname offers, and thus do not
scale out or offer clear ways to garbage collect old blocks.

\subsection{Limitations}

A limitation of any reactive asynchronous protocol, including \sysname and \consname, 
is that slow authorities are indistinguishable from faulty ones, and as a result the 
protocol proceeds without them. This creates issues around fairness and incentives, 
since perfectly correct, but geographically distant authorities may only be able to 
commit transactions submitted to them much later. This is a generic limitation of such protocols, 
and we leave the definition and implementation of fairness mechanisms to future work.
Nevertheless, we note that we get $1/2$-Chain Quality or at least 50\% of all blocks are made by honest parties, which is to the best of our knowledge the highest number any existing proposal achieves. 
Additionally, from a theoretical perspective this can be argued to be 100\% as all transactions will eventually be delivered (through re-injection) and eventually committed. However treating Chain Quality as a liveness property is quite meaningless in practice as clients don't have an infinity to wait. 

Further, \sysname relies on clients to re-submit a transaction if it is not sequenced 
in time, due to the leader being faulty. An expensive alternative is to require clients 
to submit a transaction to $f+1$ authorities, but this would divide the bandwidth of 
authorities by $O(n)$. This is too high a price to pay, since a client submitting a 
transaction to a fixed $k$ number of authorities has a probability of including a 
correct one that grows very quickly in $k$, at the cost of only $O(1)$ overhead. 
Notably, Mir-BFT uses an interesting transaction de-duplication technique based on hashing which we believe is directly applicable to \sysname in case such a feature is needed.
Ultimately, we relegate this choice to system designers using \sysname.

Traditional (monolithic) systems store all transactions data on the machine running consensus. \sysname however distributes transaction data among a number of worker machines. Despite, \sysname's approach allows to scale storage by adding more workers, it places a new burden on the execution engine. The execution engine now needs to locate and retriever transaction data from the various workers. Although \sysname's certificates irrevocable indicate which worker holds the transaction data, retrieving them requires an additional protocol and may introduce extra latency. Light clients face a similar issue, their design needs to adapt to locate and track transaction data across workers. 

Further, the high throughput of \sysname and \consname is only useful is there exist a transaction engine capable to match their speed. Building an efficient execution engine for \sysname is deferred to future work.

\balance

\section{Conclusion} \label{sec:conclusion}



We experimentally demonstrated the power of \sysname and \consname.
\sysname is an advanced mempool enabling Hotstuff to achieve throughput of $130,000$ tx/sec with under $2$ seconds latency, in a deployment of $50$ geographically distributed single-machine validators.
Additionally, \sysname enables any quorum-based blockchain protocol to maintain its throughput within periods of asynchrony or
faults, as long as the consensus layer is eventually live.
\consname leverages the structure of \sysname to achieve a throughput of $160,000$ TPS with about $3$ seconds latency. The scale-out design allows this throughput to increase to hundreds of thousands TPS without impact on latency.

In one sentence, \sysname and \consname conclusively prove that the main cost of large-scale blockchain protocols \emph{is not} consensus but the reliable transaction dissemination. Yet, dissemination alone, without global sequencing, is an embarrassingly parallelizable function, as we show with the scale-out design of \sysname. 

Our work supports a rethinking in how distributed ledgers and SMR systems are architected, towards pairing a mempool, like \sysname, to ensure high-throughput even under faults and asynchrony, with a consensus mechanism to achieve low-latency for fixed-size messages. \consname demonstrates that there exists a zero-message overhead consensus for \sysname, secure under full asynchrony.
As a result, quorum-based blockchains can scale to potentially millions of transactions per second through scale-out for payments or to build generic reliable systems through state machine replication and smart contracts.

\ifdefined\cameraReady
\section*{Acknowledgments}
The majority of this work has beend done when the authors were part of the Novi team at Facebook.
We also thank the Novi Research and Engineering teams for valuable
feedback, and in particular Mathieu Baudet, Andrey Chursin, Zekun Li, and Dahlia Malkhi for early discussions that shaped this work.
\fi

\bibliographystyle{ACM-Reference-Format}
\bibliography{references}

\appendix
\ifdefined\cameraReady
\section{Security Analysis}\label{sec:proofs}

\subsection{DAG}
\label{dagproofs}

\begin{lemma}

The DAG protocol satisfies Integrity. 

\end{lemma}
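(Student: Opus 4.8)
The plan is to reduce Integrity to two ingredients: collision resistance of the digest function, and the fact that a certificate of availability carries $2f+1$ acknowledgment signatures, of which at least $f+1$ come from honest validators. First I would unpack the hypothesis. Recall that Integrity requires that for any certified digest $d$, any two invocations of $\mathit{read}(d)$ by honest parties that return a value return the same value. By definition, a \emph{certified} digest $d$ is one for which a certificate of availability $c(d)$ exists, i.e.\ a set of $2f+1$ signatures over the triple consisting of the digest $d$, a round number $r$, and a creator identity. Since the adversary corrupts at most $f<n/3$ parties, at least $f+1$ of these signatures were produced by honest validators.

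Next I would recall how an honest validator produces such a signature. Following the reliable-broadcast step of Section~\ref{sec:dag}, it signs only after receiving a block $b$, checking the four validity conditions, computing $\mathrm{digest}(b)$ itself, and storing $b$ locally keyed by that digest. Hence each of the $\geq f+1$ honest signers of $c(d)$ has stored some block $b$ with $\mathrm{digest}(b)=d$. By collision resistance of the digest function there is, except with negligible probability, a unique such block; call it $b^\star$. So every honest signer of $c(d)$ has stored exactly $b^\star$, and in particular $b^\star$ is retrievable from any of them.

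Then I would analyze an honest invocation of $\mathit{read}(d)$ that returns a value. By the protocol it returns either the block it has itself stored under key $d$, or a block obtained through the `pull' mechanism of Section~\ref{sec:qb-rbc} from validators that signed $c(d)$; in the latter case the responder's block is accepted only if its recomputed digest equals $d$. In either case the returned block has digest $d$ and hence, again by collision resistance, equals $b^\star$. Therefore any two honest invocations of $\mathit{read}(d)$ that return a value both return $b^\star$, which is exactly Integrity.

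The place I would be most careful is the pull path: a Byzantine responder may return an arbitrary blob, so the argument genuinely relies on the reader re-checking that the retrieved block hashes to the requested digest $d$ — this is what rules out a `wrong block under the right key' equivocation. Quorum intersection with the $2f+1$-signer certificate does the complementary job of guaranteeing that at least one honest party actually holds $b^\star$ and will eventually answer the pull, which is what keeps $\mathit{read}$ from blocking forever; I would note that the full liveness of $\mathit{read}$ is really the Block-Availability property and only cite it here. The cryptographic assumptions used (collision resistance of digests, unforgeability of signatures) enter only in this paragraph and should be stated explicitly.
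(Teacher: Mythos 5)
Your argument is correct and rests on the same key fact as the paper's own proof: collision resistance of the digest function, so that no two distinct blocks can share the digest $d$ and any value returned by $\mathit{read}(d)$ must therefore be the unique block hashing to $d$. The paper's proof is essentially just that one-liner; the certificate and quorum-intersection scaffolding you add is sound but, as you yourself observe, is really doing the work of Block-Availability rather than Integrity.
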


\begin{proof}

The lemma follows from the assumption on no hash collusion. That
is, it is impossible to find two blocks that are associated with the
same digest.

\end{proof}

\begin{lemma}

The DAG protocol satisfies Block-Availability.

\end{lemma}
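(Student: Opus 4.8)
The plan is to obtain Block-Availability from three facts already in hand: the anatomy of a certificate of availability, a quorum-intersection/counting bound, and the eventual reliability of honest-to-honest links combined with the pull mechanism of Section~\ref{sec:qb-rbc}.

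First I would unfold what ``\emph{write}$(d,b)$ succeeds'' means: by definition a certificate of availability $c(d)$ has been assembled, i.e.\ $2f+1$ distinct validators have signed the tuple (digest $d$, round, creator). Since at most $f$ parties are faulty, at least $(2f+1)-f=f+1$ of these signers are honest, and by the validity checks performed before signing (conditions (1)--(4) of Section~\ref{sec:dag}, in particular that a block is stored when it is acknowledged) each such honest signer holds $b$ locally. Next I would consider an honest party that invokes \emph{read(d)} at some later point. It runs the pull mechanism: it requests the block with digest $d$ from the signers named in $c(d)$ (or, if it only holds $d$, it broadcasts the request to all $n$ validators, which necessarily reaches the signer set). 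Because links between honest parties are eventually reliable --- only finitely many messages are lost and delays, though unbounded, are finite --- after a bounded number of re-transmissions the request reaches one of the $\ge f+1$ honest holders of $b$, and that holder's reply in turn eventually arrives. On receiving a candidate block the reader recomputes its digest and accepts it only if the digest equals $d$; by the no-hash-collision assumption (the same assumption that underlies Integrity) the unique block with digest $d$ is $b$, so a Byzantine responder cannot divert the reader, and hence \emph{read(d)} completes and returns $b$.

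The main obstacle I anticipate is garbage collection: the definition permits \emph{read(d)} to be invoked arbitrarily long after \emph{write}$(d,b)$ succeeds, so one must argue that $b$ is still obtainable then. This is precisely why the consensus layer is used to agree on a single garbage-collection round and why garbage-collected blocks are \emph{moved to} --- not deleted from --- a durable, replicated store/CDN (Section~\ref{sec:gc}); given that, an honest signer, or the external store standing in for it, can always serve $b$. A secondary subtlety is that eventually-reliable channels guarantee only eventual, not timely, delivery, so the whole argument must be phrased in terms of eventual completion rather than a time bound; and one should note in passing that the pull mechanism keeps only $O(1)$ outstanding requests per block, so it respects the bounded-memory discipline and is not itself a denial-of-service vector.
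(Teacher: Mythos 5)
Your proof is correct and follows essentially the same route as the paper's: unfold the certificate into $2f+1$ signatures, subtract the at most $f$ faulty signers to conclude that at least $f+1$ honest validators store $b$, and then argue that a subsequent \emph{read(d)} must reach one of them (the paper phrases this last step as querying all validators and waiting for $n-f$ replies, so that the replier set of size $2f+1$ intersects the $f+1$ honest storers, while you phrase it via eventual link reliability and the pull mechanism). Your additional remarks on digest verification against Byzantine responders and on garbage-collected blocks remaining retrievable from durable storage are sound and in fact cover details the paper's one-line proof leaves implicit.
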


\begin{proof}

Validators locally store every block they sign.
A $write(d,b)$ operation completer $b$ is certified.
Since $2f+1$ signatures are required for a block to be certified, we
get that at least $f+1$ honest validators store the block $b$
associated with the digest $d$. A $read(d)$ operation query all
validators and wait for $n-f$ to reply.
Therefore, any $write(d,b)$ operation invoked after $write(d,b)$
completes will get a reply from at least 1 honest party that stores $b$
and the lemma follows.

\end{proof}







\begin{lemma}

The DAG protocol satisfies 
$1/2$-Censorship-Resistance.

\end{lemma}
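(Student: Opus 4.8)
The plan is to establish the $1/2$-Chain Quality (equivalently, the $1/2$-censorship-resistance) property of \texttt{read\_causal} by a layer-by-layer counting argument over the DAG. Fix a successful invocation \texttt{read\_causal(d)} returning a set $B$, where $d$ is the digest of a certified block $b$ at some round $r$; by definition $B$ is exactly the set of blocks transitively referenced by $b$. Since in \sysname every reference carried by a valid block is a certificate of availability for a block of the immediately preceding round (the block format together with validity condition 3), I would first show by downward induction on $r' \in \{r-1, \dots, 0\}$ that the layer $B_{r'} \subseteq B$ of round-$r'$ blocks reachable from $b$ has $|B_{r'}| \ge 2f+1$: block $b$ references at least $2f+1$ certificates for round $r-1$; each of those round-$(r-1)$ blocks, being certified and hence valid, references at least $2f+1$ certificates for round $r-2$; and so on down to the genesis layer $B_0$. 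Thus $B = \bigcup_{r'=0}^{r-1} B_{r'}$ with every layer of size at least $2f+1$, and by the Integrity lemma each layer is a well-defined set of distinct certified blocks.

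The second step bounds the Byzantine contribution to each layer. The key sub-claim is that for any creator identity and any round at most one block ever becomes certified: a certificate carries $2f+1$ signatures, so by quorum intersection any two certificates for the same $(\text{creator},\text{round})$ pair share an honest signer, contradicting validity condition 4, under which an honest validator signs only the first block it receives from a given creator in a given round. Consequently each layer $B_{r'}$ contains at most one block per Byzantine validator, i.e. at most $f$ Byzantine-authored blocks, hence at least $|B_{r'}| - f$ honest-authored ones. Since $|B_{r'}| \ge 2f+1 \ge 2f$ we get $|B_{r'}| - f \ge |B_{r'}|/2$, so at least half of every layer — including the genesis layer, where at most $f$ of the empty blocks have Byzantine creators — is honest-authored. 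Summing over $r' = 0, \dots, r-1$ (note $b$ itself is not in $B$, so the top round contributes nothing that could skew the ratio) yields that at least half the blocks in $B$ were written by honest parties.

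The main obstacle is the no-equivocation sub-claim, and more precisely making rigorous that \emph{only certified blocks can enter a causal history}, so that quorum intersection is applicable: this rests on the fact that \sysname blocks reference certificates rather than bare digests, which is exactly what the block format and validity conditions 3--4 enforce, together with Integrity to rule out digest collisions. Everything else is routine counting. Finally I would add a remark tying this back to censorship resistance: since any block leader committed by a consensus layer over \sysname pulls in a causal history at least half of which is honest, and honest validators re-inject the transactions of any uncommitted garbage-collected block into later rounds, every submitted transaction is committed within a constant number of rounds, which is the HoneyBadger-style guarantee claimed in the overview.
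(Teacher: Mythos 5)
Your proposal is correct and follows essentially the same route as the paper's proof: each round of the causal history contains at least $2f+1$ blocks because every valid block references $2f+1$ certificates from the previous round, and at most $f$ of these per round can be Byzantine-authored, which gives the $1/2$ ratio. The paper's version is only two sentences and leaves implicit the no-equivocation argument (at most one certified block per creator per round) that you spell out to justify the ``at most $f$ Byzantine blocks per layer'' bound; your added detail is a faithful elaboration rather than a different approach.
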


\begin{proof}

Since block refers to at least $2f+1$ from the previous round, $B$
contains at least $2f+1$ blocks in each round.
The lemma follows since at most $f$ of them were written by byzantine
validators.

\end{proof}

\begin{lemma}

The DAG protocol satisfies $2/3$-Causality.

\end{lemma}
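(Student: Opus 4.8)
The plan is to prove the bound round by round. Let $b$ be the block with digest $d$, sitting at some round $r$. I will show that at \emph{every} round below $r$ the causal history captures at least $2f+1$ of the at most $n=3f+1$ blocks ever certified at that round, i.e.\ a fraction $\tfrac{2f+1}{3f+1}>\tfrac23$ of the successfully-written blocks, and then sum over rounds.

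First I would pin down the two sets. Write $B$ for the set returned by \emph{read\_causal}$(d)$ (the transitive ``happened-before'' history of $b$) and $W$ for the blocks written successfully, i.e.\ carrying a certificate of availability, before \emph{write}$(d,b)$ was invoked, i.e.\ before $b$ is created and broadcast. I would first observe $B\subseteq W$: by downward induction on the DAG every block transitively referenced from $b$ is already certified when $b$ is broadcast --- the $2f+1$ certificates $b$ carries for round $r-1$ exist before $b$ is sent, each such round-$(r-1)$ block carries certificates for round $r-2$ that predate it, and so on down to the genesis round --- so $B$ never contains anything outside $W$.

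Then come the two counting facts, both from quorum intersection over $n=3f+1$ parties; I abbreviate by $R_{r'}$ the set of blocks of round $r'$. (i) At most $n$ blocks are ever certified per round: a certificate needs $2f+1$ signatures, a correct party signs only the first block it receives from a given creator at a given round (validity condition~(4)), so two distinct certified blocks from the same creator and round would share $\ge (2f+1)+(2f+1)-(3f+1)=f+1$ signers, forcing an honest double-signer, which is impossible; hence $|W\cap R_{r'}|\le 3f+1$ for every $r'$. (ii) For every round $r'<r$, $|B\cap R_{r'}|\ge 2f+1$: $b$ carries certificates for $\ge 2f+1$ round-$(r-1)$ blocks (validity condition~(3)), any such block carries certificates for $\ge 2f+1$ round-$(r-2)$ blocks, and so on down to round $0$, where the $3f+1\ge 2f+1$ certified genesis blocks are reached; a routine downward induction gives the claim since $B$ is transitively closed. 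Combining, for each $r'<r$ we get $|B\cap R_{r'}|\ge 2f+1>2f+\tfrac23=\tfrac23(3f+1)\ge\tfrac23\,|W\cap R_{r'}|$, and summing over $r'<r$, using that $B$ holds no block of round $\ge r$ (a causal history only reaches backwards), yields $|B|>\tfrac23\,|W|$.

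The hard part will be the very last step: it is clean only if $W$ itself has essentially nothing at rounds $\ge r$, i.e.\ no block of round $\ge r$ is yet certified when \emph{write}$(d,b)$ is invoked. Under the intended usage this holds, since a round-$r$ certificate already needs $2f+1$ signers who have reached round $r$, and no round-$(r+1)$ block can even be formed before a round-$r$ certificate exists, so writers advance through rounds essentially together and the overhang past $r$ is bounded by network delay rather than being arbitrary. I would therefore either record the mild precondition that \emph{write}$(d,b)$ is invoked no later than the first certification at round $r$, or equivalently take the comparison set to be the successfully-written blocks of rounds below $r$; with rounds $\ge r$ removed from the count, the per-round slack $\tfrac{2f+1}{3f+1}>\tfrac23$ closes the argument.
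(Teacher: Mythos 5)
Your proof takes essentially the same approach as the paper's: a per-round count, with at most $3f+1$ blocks successfully written per round and at least $2f+1$ of them captured in the causal history via the validity condition requiring $2f+1$ certificates from the previous round, giving the ratio $\tfrac{2f+1}{3f+1}>\tfrac23$. The paper's own proof is a two-line sketch of exactly this argument (containing an apparent typo, ``at most $2f+1$'' where ``at least'' is meant), and your extra care about blocks certified at rounds $\ge r$ before \emph{write}$(d,b)$ is invoked flags a genuine edge case that the paper's proof silently assumes away.
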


\begin{proof}

There are at most $3f+1$ written blocks associated with each round. Let
$r$ be the round in which $b$ was certified, the lemma follows
immediately from the fact that $B$ contains at most $2f+1$ blocks from
rounds smaller than $r$.

\end{proof}

\begin{lemma}

The DAG protocol satisfies Containment.

\end{lemma}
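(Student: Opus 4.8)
The plan is to show that \emph{read\_causal(d)} is a deterministic function of the digest $d$ which returns exactly the set of ancestors of $b=\emph{read(d)}$ under the transitive closure of the reference relation $\rightarrow$, and then to derive Containment purely from transitivity of $\rightarrow$.

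First I would check that the causal history is well defined and finite. A valid block at round $r>0$ carries certificates of availability for $2f+1$ blocks of round $r-1$, while a genesis block at round $0$ references nothing; hence following references from $b$ strictly decreases the round number and terminates after at most $\mathrm{round}(b)$ levels. By Integrity, any two honest parties that retrieve the block for a given digest obtain identical contents, so the set of blocks reachable from $b$ by following references depends only on $d$; write $\mathrm{hist}(b)$ for this set, noting $\mathrm{hist}(b)=\{\,b'' : b''\rightarrow\cdots\rightarrow b\,\}$. By Block-Availability, every block transitively referenced by a certified block is itself certified and eventually retrievable, so the invocation \emph{read\_causal(d)} completes and returns the set $B=\mathrm{hist}(b)$.

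The core step is then immediate. Take any $b'\in B=\mathrm{hist}(b)$, so $b'\rightarrow\cdots\rightarrow b$; let $d'$ be the digest of $b'$, so that \emph{read\_causal(d')} returns $B'=\mathrm{hist}(b')$. For an arbitrary $b''\in B'$ we have $b''\rightarrow\cdots\rightarrow b'$; concatenating this path with $b'\rightarrow\cdots\rightarrow b$ and using transitivity of the happened-before relation gives $b''\rightarrow\cdots\rightarrow b$, i.e.\ $b''\in\mathrm{hist}(b)=B$. Hence $B'\subseteq B$, which is exactly Containment.

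I expect the transitivity argument itself to be trivial; the only real content is the first step, namely pinning down that the causal history of a block is fully determined by its contents and cannot be enlarged by a Byzantine party. This rests on references being to fixed cryptographic digests, on Integrity (no digest collisions, so a digest names a unique block), and on Block-Availability (every referenced block is retrievable) — all of which are already established, so no new machinery is required.
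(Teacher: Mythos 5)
Your proposal is correct and has the same overall shape as the paper's proof---first establish that the causal history of a certified block is a well-defined set on which all honest parties agree, then derive containment from transitivity of the reference relation---but the two arguments pin down well-definedness by different mechanisms. The paper argues via non-equivocation: honest validators sign at most one block per author per round, so by quorum intersection two distinct blocks from the same author in the same round can never both be certified; hence all honest validators agree on the set of references inside every certified block, and the lemma ``follows by recursively applying the above argument.'' You instead argue via the fact that references are cryptographic digests together with Integrity (collision resistance): a digest names a unique block, so $\mathrm{hist}(b)$ is a deterministic function of $d$. Since \sysname certificates carry the block digest and blocks reference prior blocks by those certificates, your route is sound and arguably more direct; the paper's non-equivocation argument is the structurally stronger fact (it excludes the very existence of two certified blocks from one author in one round, not merely the possibility of confusing them) and is the one that would survive if references were keyed by author and round rather than by hash. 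You also make explicit two steps the paper leaves implicit: termination of the recursion because round numbers strictly decrease along references, and the final transitivity step showing $B'\subseteq B$. Both proofs tacitly assume \emph{read\_causal} returns the \emph{full} set of ancestors rather than an arbitrary subset of them, which is the intended semantics, so this is not a gap relative to the paper.
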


\begin{proof}

Every block contains its author and round
number and honest validators do not sign two different blocks in the
same round from the same author.
Therefore, since $2f+1$ signatures are required to certify a block, two
blocks in the same round from the same author can never be certified.
Thus, for every certified block that honest validators locally store,
they always agree on the set of digest in the block (references to
blocks from the previous round).
The lemma follows by recursively applying the above argument starting
from the block $b'$. 

\end{proof}

\subsection{Asynchronous consensus}
\label{consproof}

\para{Safety}

\quorum*

\begin{proof}

An honest validator commits a block $b$ in a instance $i$ only if there
are at least $f+1$ nodes in the second round of the instance with links
to $b$. 
Since every block has at least $2f+1$ links to blocks in the
previous round, we get by quorum intersection that every block in the
first round of instance $i+1$ has a path to $b$. 
Therefore, with a simple inductive argument we 
can show that that every block in every round in instances higher than
$i$ have paths to $b$. The lemma follows.

\end{proof}

\begin{lemma}
\label{lem:safetyayx}

Let $b$ and $b'$ be the block leaders of instances $i$ and $i'$,
respectively.
If an honest validator $v$ commits $b$ before $b'$, then no honest
validator commits $b'$ without first committing $b$.

\end{lemma}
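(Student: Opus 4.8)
The plan is to reduce the claim to two structural properties of the commit rule in \Cref{alg:SMROnDag}. Property~(1): \emph{a single honest validator commits leaders in strictly increasing wave order}. This is immediate from the code --- \textit{decidedWave} is monotone non-decreasing, and within one firing of $\textit{wave\_ready}(w)$ the stack is first loaded with the wave-$w$ leader and then, as the loop steps through strictly smaller waves, with the corresponding leaders, so $\textit{order\_vertices}$ (popping LIFO) outputs them oldest-wave first; since $\textit{get\_wave\_vertex\_leader}(w)$ denotes one well-defined block per wave (the shared coin fixes a single authority), it follows that if $v$ commits $b$ (the wave-$i$ leader) before $b'$ (the wave-$i'$ leader) then $i<i'$ (equality would force $b=b'$, and a later wave cannot be committed earlier). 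Property~(2): \emph{if some honest validator commits $b$ at wave $i$, then every honest validator commits $b$ the first time its \textit{decidedWave} reaches a value $\ge i$}. Property~(2) is the crux, and it is where \Cref{lem:quorum} enters.

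To establish (2), I would fix an honest validator $u$ and let $w^\star$ be the first wave whose $\textit{wave\_ready}$ call at $u$ both fires and satisfies $w^\star\ge i$; just before that firing $u$'s \textit{decidedWave} is $<i$. If $w^\star=i$, the firing commits $\textit{get\_wave\_vertex\_leader}(i)=b$ and we are done. If $w^\star>i$, the loop of this firing runs from wave $w^\star-1$ down to one more than the prior value of \textit{decidedWave}, a range that contains $i$. When the loop reaches wave $i$, the running candidate $c$ is --- by inspection of the loop --- the leader of some wave $j$ with $i<j\le w^\star$, and $c$ already sits on the stack, so $u$ will in fact commit $c$ in this very firing. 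Applying \Cref{lem:quorum} with $v$ committing $b$ at wave $i$ and $u$ committing $c$ at the later wave $j$, we get that $c$ has a strong path to $b$ in $u$'s local DAG; in particular $b$ lies in $u$'s DAG, so $\textit{get\_wave\_vertex\_leader}(i)=b\neq\bot$, the guard $\textit{strong\_path}(c,b)$ holds, and the loop pushes $b$. Since $b$ is pushed only after every leader of a wave $>i$ handled so far, it ends up above all of them on the stack, hence is committed before all of them.

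With (1) and (2) in hand I would assemble the lemma as follows. Assume $v$ commits $b$ before $b'$; by (1) then $i<i'$. Let $u$ be any honest validator that commits $b'$. The firing of $\textit{wave\_ready}$ at $u$ in which $b'$ is committed is at some wave $w\ge i'>i$ --- either $w=i'$, or $w>i'$ with $b'$ picked up by the recursion --- so the first firing of $u$ with wave $\ge i$, namely $w^\star$, satisfies $w^\star\le w$. By (2), $u$ commits $b$ in its $w^\star$-firing. If $w^\star<w$, then $b$ is committed in a strictly earlier firing, hence before $b'$ by (1). If $w^\star=w$, then $b$ and $b'$ both sit on the stack of this single firing, and since $i<i'$ the leader $b$ is popped (committed) before $b'$. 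Either way $u$ commits $b$ before $b'$, which is the claim. (Combined with \Cref{lem:quorum} this is precisely the ingredient \Cref{lem:safty} needs, after which the Containment property of \sysname upgrades agreement on the leader sequence to agreement on the whole total order.)

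I expect the one genuinely delicate step to be the observation in the middle paragraph --- that the candidate $c$ met when the recursion descends to wave $i$ is itself a leader of a strictly later wave that $u$ is about to commit --- since this is exactly what licenses the application of \Cref{lem:quorum} that forces $\textit{strong\_path}(c,b)$ and hence the inclusion of $b$ in $u$'s commit sequence. Everything else is careful bookkeeping around the monotone \textit{decidedWave} counter and the LIFO discipline of the stack, with the boundary cases ($w^\star=i$, or $c$ never being updated away from the wave-$w^\star$ leader) folding into the main argument.
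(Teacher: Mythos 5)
Your proof is correct and rests on the same two ingredients as the paper's: leaders are committed in increasing wave order, and Lemma~\ref{lem:quorum} forces a path from any later committed candidate back to $b$, so the recursion in Algorithm~\ref{alg:SMROnDag} cannot skip $b$. The paper phrases this as a four-line contradiction ("a path must exist by Lemma~\ref{lem:quorum}"), whereas you give a direct, more detailed elaboration of the same argument (including the \textit{decidedWave}/stack bookkeeping the paper leaves implicit), so no substantive difference in approach.
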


\begin{proof}

Since $v$ commits $b$ before $b'$, then there is no path between $b$ to
$b'$ in the DAG. 
Assume by a way of contradiction that some honest validator $v'$
commits $b'$ before $b$. 
Thus, there is no path between $b$ to
$b'$ in the DAG.
However, by Lemma~\ref{lem:quorum}, one of the paths must exists.
A contradiction.

\end{proof}

Lemma~\ref{lem:safetyayx} immediately implies the following:

\safety*

\para{Liveness}

\livenessaux*

\begin{proof}

Consider any set $S$ of $2f+1$ blocks in the second round of instance
$i$.
The total number of links they have to the first round is
$(2f+1)(2f+1) = 4f^2+4f+1$.
The number of possible blocks in the first round of the instance is
$3f+1$.
Therefore, even if every block in the first round has $f$ links from
blocks in $S$, there are still $4f^2+4f+1 - f(3f+1) = f^2+3f+1$ links.
The maximum number of links from blocks in $S$ to each block in
the first round is $2f+1$.
Thus, there are at least $\frac{f^2+3f+1}{2f+1-f} \geq f+1$ blocks in
the first round such that each one of them has at least $f+1$ links
from block in $B$.

\end{proof}

\liveness*

\begin{proof}

Consider any instance $i$.
By Lemma~\ref{lem:livenessaux}, there are at least $f+1$ block in
instance $i$ that satisfy the commit rule. 
Since the adversary do not know the outcome of the coin before these
leaders are determined and since the coin is uniformly distributed, we
get that the probability to elect a leader that satisfies the commit
rule in instance $i$ is at least $1/3$.
Therefore, in expectation, a block leader is committed every 3
instances.
Every instance consists of 3 rounds, but since we combine the last round
of an instance with the first of the next one, we get that, in
expectation, \consname  commits a block leader
every 7 rounds in the DAG.

\end{proof}

\livenessran*

\begin{proof}

Consider an instance $i$, let $b_i$ be the leader of instance $i$ and
let $S$ be a set of $2f+1$ blocks in the second round of instance
$i$. 
Messages delays are distributed uniformly at random, and
each block in $S$ includes references to blocks in the first round of
instance $i$ independently of other blocks in $S$.
Therefore, each block in $S$ includes a reference to $b_i$ with
probability of at least $\frac{2f+1}{3f+1} \geq 2/3$.
We next show that the probability that at least $f+1$ nodes in $S$ include
$b_i$ is $0.74$.
To this end, we compute the probability for $f=1$ as for larger $f$ the probability is higher.
The probability that at least $2$ nodes out of the 3 nodes in $S$ include
$b_i$ is $\frac{8}{27} + \frac{12}{27} = 0.74$.
Thus, for every instance, the probability to elect a leader that
satisfies the commit rule is at least $0.74$.
Thus, \consname commits, in
expectation, a block leader every 4 rounds (or less for $f>1$) in the DAG. 

\end{proof}

%
%

%

\section{Artifact Appendix} 
\subsection{Abstract}
We open-source the Rust implementation of \sysname and \consname, HS-over-\sysname as well as all Amazon Web Services orchestration scripts, benchmarking scripts, and measurements data to enable reproducible results.

\subsection{Description \& Requirements}
\subsubsection{How to access}
Our implementation of \sysname and \consname is hosted on the following public GitHub repository: 
$$\texttt{https://github.com/asonnino/narwhal}$$ 
The graphs in this paper are generated using the following commit:
$$\texttt{70dc862db090260dc77acdaa807b4584475bafc2}$$ 

Our implementation of HS-over-\sysname is hosted on the following public GitHub repository: 
$$\texttt{https://github.com/asonnino/narwhal/tree/narwhal-hs}$$
The graphs in this paper are generated using the following commit:
$$\texttt{add64984c62b7cf24dced6d40df52ae57032cfa3}$$

\subsubsection{Hardware dependencies}
\sysname does not require any particular hardware dependency. We however run all benchmarks on  Amazon Web Services (AWS), using \texttt{m5.8xlarge} instances. They provide 10Gbps of bandwidth, 32 virtual CPUs (16 physical core) on a 2.5GHz, Intel Xeon Platinum 8175, and 128GB memory. For persistent storage, we equip every instance with an Ebs drive \textit{gp2} of 200GB.

\subsubsection{Software dependencies}
Every instance runs a fresh install of Ubuntu Server 20.04. We used the following AMI image: `Canonical, Ubuntu, 20.04 LTS, amd64 focal image build on 2020-10-26'. Our implementation of \sysname requires Rust 1.51+ and the following dependencies (installed through \texttt{apt-get}):
\begin{itemize}
    \item \texttt{build-essential}
    \item \texttt{cmake}
    \item \texttt{clang}
\end{itemize}

\subsubsection{Benchmarks} 
The benchmarks described in this paper require no input data. We however open-source the raw data used to produce the graphs of \Cref{sec:evaluation}: \url{https://github.com/asonnino/narwhal/tree/master/benchmark/data/paper-data}.

\subsection{Set-up} \label{sec:setup}
To facilitate the experimental set-up, we open-source our AWS orchestration scripts. The scripts are written in Python 3.8 and can be installed as follows:
\begin{verbatim}
git clone https://github.com/asonnino/narwhal.git
cd narwhal/benchmark
pip install -r requirements.txt
\end{verbatim}
A detailed tutorial to used the scripts is availlabble on GitHub: \url{https://github.com/asonnino/narwhal/tree/master/benchmark}.
In particular, steps 1-4 of section `AWS Benchmarks' set up an experimental testbed on AWS.

\subsection{Evaluation workflow}
\subsubsection{Major Claims}
Our evaluation (\Cref{sec:evaluation}) demonstrates the following major claims (Cx):
\begin{itemize}
\item (C1) \sysname as a \mempool has advantages over the existing simple \mempool as well as straightforward extensions of it 
\item (C2) The scale-out is effective, in that it increases throughput linearly as expected. 
\item (C3) \consname is a highly performing consensus protocol that leverages \sysname to maintain high throughput when increasing the number of validators (proving our claim that message complexity is not that important).
\item (C4) \sysname provides robustness when some parts of the system inevitably crash-fail or suffer attacks.
\end{itemize}

\subsubsection{Experiments}
We run the following experiments.~\\

\textit{Experiment (E1): [Common Case]: Evaluation of the common-case (no faulty validators) with various committee sizes.}\\\\
\textit{[Preparation]}
Follow step 5 of the tutorial linked in \Cref{sec:setup}. In particular, set the following variable in \texttt{fabfile.py}:
\begin{verbatim}
bench_params = {
    'nodes': [10, 20, 50],
    'workers: 1,
    'collocate': True,
    'rate': [20_000, 50_000, 100_000],
    'tx_size': 512,
    'faults': 0,
    'duration': 300,
    'runs': 2,
}
\end{verbatim}
Adjust in the input rate \texttt{rate} to reproduce the desired data point of \Cref{fig:latency}
~\\

\textit{[Execution]}
Run the following command: \texttt{fab remote}.
~\\

\textit{[Results]}
Follow step 6 of the tutorial linked in \Cref{sec:setup}. In particular, set the following variable in \texttt{fabfile.py}:
\begin{verbatim}
plot_params = {
    'faults': [0],
    'nodes': [10, 20, 50],
    'workers': [1],
    'collocate': True,
    'tx_size': 512,
    'max_latency': [3_500, 4_500]
}
\end{verbatim}
~\\

\textit{Experiment (E2): [Scalability]: Evaluation with many workers on different machines (no faulty validators).}\\\\
\textit{[Preparation]}
Follow step 5 of the tutorial linked in \Cref{sec:setup}. In particular, set the following variable in \texttt{fabfile.py}, where \texttt{X} is the desired number of workers:
\begin{verbatim}
bench_params = {
    'nodes': [4],
    'workers: X,
    'collocate': False,
    'rate': [100_000, 300_000],
    'tx_size': 512,
    'faults': 0,
    'duration': 300,
    'runs': 2,
}
\end{verbatim}
Adjust in the input rate \texttt{rate} to reproduce the desired data point of \Cref{fig:scalability-latency}.
~\\

\textit{[Execution]}
Run the following command: \texttt{fab remote}.
~\\

\textit{[Results]}
Follow step 6 of the tutorial linked in \Cref{sec:setup}. In particular, set the following variable in \texttt{fabfile.py}:
\begin{verbatim}
plot_params = {
    'faults': [0],
    'nodes': [4],
    'workers': [1, 4, 7, 10],
    'collocate': False,
    'tx_size': 512,
    'max_latency': [3_500, 4_500]
}
\end{verbatim}
~\\

\textit{Experiment (E2): [Crash-faults]: Evaluation with 0, 1 and 3 crash-faults.}\\\\
\textit{[Preparation]}
Follow step 5 of the tutorial linked in \Cref{sec:setup}. In particular, set the following variable in \texttt{fabfile.py}, where \texttt{X} is the desired number of crash-faults:
\begin{verbatim}
bench_params = {
    'nodes': [10],
    'workers: 1,
    'collocate': True,
    'rate': [30_000, 70_000],
    'tx_size': 512,
    'faults': X,
    'duration': 300,
    'runs': 2,
}
\end{verbatim}
Adjust in the input rate \texttt{rate} to reproduce the desired data point of \Cref{fig:latency_faults}.
~\\

\textit{[Execution]}
Run the following command: \texttt{fab remote}.
~\\

\textit{[Results]}
Follow step 6 of the tutorial linked in \Cref{sec:setup}. In particular, set the following variable in \texttt{fabfile.py}:
\begin{verbatim}
plot_params = {
    'faults': [0, 1, 3],
    'nodes': [10],
    'workers': [1],
    'collocate': True,
    'tx_size': 512,
    'max_latency': [10_000, 5_000]
}
\end{verbatim}
~\\

\subsection{General Notes}
\label{sec:gnotes}
Please be mindful that the experiments on AWS described above can be (very) expensive.

\fi

\end{document}